\newtheorem{lem}{Lemma}
\def\BibTeX{{\rm B\kern-.05em{\sc i\kern-.025em b}\kern-.08em
    T\kern-.1667em\lower.7ex\hbox{E}\kern-.125emX}}
\begin{document}
\setlength{\belowdisplayskip}{2pt}

\title{Location Privacy and Spectrum Efficiency Enhancement in Spectrum Sharing Systems}

\vspace{-.2cm}

\author{\IEEEauthorblockN{Long Jiao, \textit{Member, IEEE,} Yao Ge, \textit{Member, IEEE,} Kai Zeng, \textit{Member, IEEE,} B.C. Hilburn}
\thanks{This work is mainly conducted at Wireless Cyber Center, George Mason University, and partially conducted at the University of Massachusetts Dartmouth. This work was supported in part by the Microsoft Research Award, US Army Research Office (ARO) through grant No. W911NF-21-1-0187, the Commonwealth Cyber Initiative (CCI) and its Northern Virginia (NOVA) Node, an investment in the advancement of cyber R\&D, innovation and workforce development. (Corresponding author: Kai Zeng.)}
\thanks{Long Jiao is with the Department of Computer \& Information Science, University of Massachusetts Dartmouth, North Dartmouth, MA 02747 USA (email: ljiao@umassd.edu).}

\thanks{Yao Ge is with the Continental-NTU Corporate Lab, Nanyang Technological University, Singapore (email: yao.ge@ntu.edu.sg).} 

\thanks{Kai Zeng is with the Department of Electrical and Computer Engineering, George Mason University, Fairfax, VA 22030 USA (email: kzeng2@gmu.edu).}

\thanks{B.C. Hilburn is with the Microsoft Corp., USA (email: benjamin.hilburn@microsoft.com).}
}

\maketitle

\begin{abstract}
In this work, we investigate the benefits of secondary user (SU) network beamforming on improving primary user (PU) location privacy in spectrum sharing systems, where the beamformer in the SU network is designed to suppress the aggregate interference to improve the location privacy of PUs. 
We consider two problems: improving SU network communication throughput subject to the specified PU location privacy requirements, and enhancing PU location privacy given the quality of service (QoS) requirements of SU networks. 
In the first problem, we provide an algorithm to achieve high data rates with the constrained PU location privacy level. {Numerical results show that for a given PU location privacy requirement, the proposed scheme is able to interfere/exclude only a few SU nodes from the PU band and the network throughput can be greatly improved.} In the second problem, to fully explore the potentials of SU network beamforming for enhancing PU location privacy, we propose a two-step scheme to decouple the beamforming and privacy zone design so that the PU location privacy can be improved while satisfying the SU network throughput requirement. According to numerical evaluations, the proposed scheme can maintain/achieve higher PU location privacy than the benchmark beamforming schemes while satisfying a QoS requirement for the SU network.

\end{abstract}

\begin{IEEEkeywords}
\vspace{-.1cm}
5G, Spectrum sharing systems, Beamforming, Location privacy, Operational security
\end{IEEEkeywords}
\IEEEpeerreviewmaketitle
\vspace{-.5cm}
\section{Introduction}
\vspace{-.2cm}
Enabled by key technologies such as heterogeneous networks and beamforming, the fifth generation (5G) mobile networks aim to achieve 1,000 times throughput increase and 10 times spectrum efficiency improvement compared with existing networks \cite{agiwal2016next,ahmed2018survey,akpakwu2017survey,jiao2019physical}. For instance, due to the directional-communication ability, beamforming has the potential to realize high spectral efficiency and support a wide range of applications \cite{jang2016smart,mismar2019deep,razavizadeh2014three}, while meeting high throughput requirements. 

Along with the prosperity of 5G and beyond networks, radio spectrum becomes increasingly congested due to the need to support emerging cyber-physical system (CPS) and Internet of Things (IoT) applications, such as virtual/augmented reality, connected vehicles, smart cities, smart manufacturing, etc.
Dynamic spectrum sharing (DSS) is considered as a key technology to improve the spectrum utilization by allowing secondary users (SUs) (e.g., cellular network users) to opportunistically access to the spectrum owned by primary users (PUs), such as military radars \cite{ahmad20205g,sangdeh2019practical,ge2022energy}. 
To enable DSS to achieve high spectrum and energy efficiency, the spectrum access system (SAS)-assisted DSS, a new paradigm, has been proposed to provide interfaces for PUs and SUs for the efficient spectrum sharing in the 3550-3700 MHz band \cite{sohul2015spectrum}. The SAS is responsible to manage the SUs and ensure the priority of spectrum usage for active PUs \cite{sss2013NIIT}.

Despite the great advantages achieved regarding spectrum sharing systems, the investigation of PU privacy protection is still in its infancy. In the operation of SAS-assisted spectrum sharing systems, PUs need to register and exchange sensitive operational information with SAS. Operational information of PUs usually includes geo-location of PUs, PUs' carrier frequency, and time/duration of transmissions, which could reveal PUs' confidential information. For instance, the historical location information for a military PU may reveal confidential information including its identity, its trajectory, and its mission \cite{salama2019trading}. The protection of PUs' operational information is identified as operational security (OPSEC) of PUs \cite{bahrak2014protecting,bhattarai2017thwarting,clark2017trading}. 
OPSEC of PUs is usually subject to two types of inference attacks. 
First, the sensitive information of PUs can be inferred based on registered records at the SAS if adversaries hack into the database of the SAS, or eavesdrop over the communication between the SAS and PUs. This attack is usually referred to as the \emph{SAS-hijack inference attack}. Second, in addition to the spectrum database hijacking, the OPSEC of PUs could be threatened by the \emph{colluding inference attack}. For instance, the adversary can control a group of colluding SUs and make a large number of queries to the SAS database. The accumulated database queries can be gathered to infer PUs' location information \cite{clark2017trading,clark2020optimizing,salama2019trading}.

To protect the PUs' location privacy from the above attacks, the privacy zone has been proposed in the existing works \cite{salamal2020privacy,zhang2015optimal}. {The privacy-zone based solutions have multiple advantages. First, the privacy zone is a PU-specified region locating in the exclusion zone (EZ). For instance, privacy zone locates inside the EZ and thus overlaps a part of EZ. PUs' real location doesn't need to be released. Second, the aggregate interference at privacy zone is restricted so that the PU can function normally in any location of the privacy zone. For instance, the privacy zone can be divided into a group of cells and PUs can locate in any cell because each cell is under a unified aggregate interference threshold.} Each cell in the privacy zone thus has the same probability to be the real PU locations. 

Improving the PU location privacy and enhancing the performance of SU communications based on the privacy zone, however, are two contradictory objectives due to the existence of the aggregate interference: the performance of one objective is directly limited by the other. Let us consider the SUs at/near the boundary of the EZ.  
On one hand, the SU network throughput can be constrained  by a given privacy zone.
To fulfill the aggregate interference threshold imposed by the privacy zone, the transmission power of each SU locating outside of the EZ has to be controlled and some SU transmissions in the shared band even need to be excluded in order to meet the aggregate interference requirement. 
This effect results in the low SU network throughput on the shared band.  
On the other hand, the size of privacy zone (or the privacy level of PUs) will be restricted by the aggregate interference due to the SU communications. With severe aggregate interference from SU networks, it could be possible that only a small area in EZ can be set as the privacy zone, which satisfies the interference threshold requirement. That is, a higher transmission power in SU networks implies high throughput, but at the same time leads to the higher interference to PUs and limits the size of the privacy zone. Therefore, a solution being beneficial to the two objectives is highly desired, but remains elusive.



The underlying reason of the aforementioned dilemma is attributed to the severe SU communication interference. This effect is overlooked in the existing works. Multiple-antenna beamforming, as a trending technique in 5G and NextG wireless networks \cite{bertizzolo2020cobeam}, is particularly efficient for suppressing interference. The beamforming-enhanced PU location privacy protection is promising yet is missing in the existing works. For instance, beamforming has the potential to strike a good tradeoff between the two contradictory objectives so that one objective can be improved without diminishing the other. {First, beamforming can be leveraged to enhance the SU network communication performance given the aggregate interference threshold imposed by the privacy zone. 
The SAS can coordinate SU beamforming to simultaneously increase the data rates and turn the mainlobe of the beam away from the privacy zone.} 
In this case, the interference mitigation ability enabled by beamforming allows SAS to configure the SU communications with much higher transmission power or incorporate more SU nodes in current networks without violating the aggregate interference threshold. 
Secondly, beamforming-based scheme will enable a higher PU location privacy level compared with the ones without beamforming or with the standard beamformers. 
The SAS can coordinate the SU beamformings to reduce the aggregate interference from SUs so that the size of the privacy zone can be expanded while meeting the SU network throughput requirement.
As a result, the uncertainty of real PU locations is increased after expanding the privacy zone and the location privacy of PUs is thus enhanced.
However, the benefit of beamforming on the PU location privacy protection has not been explored. To fill this research gap and embrace the opportunity brought by beamforming, for the first time, we aim to improve the PU location privacy by leveraging the SU network beamforming.

To fully explore the benefits of beamforming on improving the PU location privacy, we formulate two problems.
In the first problem, we aim to improve SU network communication throughput given a privacy zone. 
With the interference mitigation ability, beamforming enables the SU networks to serve more SU nodes and provide higher data rates without violating the aggregated interference threshold for the given privacy zone. 
In the second problem, we investigate the benefits of beamforming on improving the PU location privacy given the SU network throughput requirement.
The proposed scheme can directly suppress the aggregate interference via beamformer design so that the aggregated interference in a large area within EZ can be controlled under a threshold. 
Therefore, beamforming enables SAS to configure a much larger privacy zone while satisfying the SU network throughput requirement.

To sum up, the major contributions of this work are as follows.
\begin{itemize}[leftmargin=0.4cm]
    \item For the first time, we investigate the potentials of beamforming techniques on enhancing PU location privacy. We utilize beamforming to strike a good balance between two contradictory objectives, i.e., improving SU network data rates and enhancing PU location privacy. {Problem 1 focuses on the location-privacy-first case and PU has the priority to the spectrum usage. Problem 2 enables to balance the PU location privacy and SU network throughput, which can increase the efficiency of spectrum sharing and thus increase revenue.}
    We formulate two problems.
    
    \item In the first problem, the beamforming techniques are applied to improve SU network communication throughput with a given privacy zone (with the specified PU location privacy). Compared with the schemes with standard beamformers, the proposed scheme can serve more SU nodes with higher network throughput.
    
    \item In the second problem, we aim to improve PU location privacy performance using beamforming techniques while satisfying the SU network throughput requirement. Beamforming changes rapidly along with the fast fading wireless channels, whereas the privacy zone holds static for a long time period. Due to this reason, we develop a long-term upper bound for the antenna gain and our design relies on long-term network deployment information. Therefore, the second problem is not the first problem's reciprocal. {Numerical results show that for a given privacy zone, the proposed scheme has a low cutoff threshold (-110 dBm) with the SU node exclusion, while the traditional maximum ratio combining (MRC) beamformer-based scheme couldn’t maintain the same privacy level after  -104 dBm).}
    
\end{itemize}

The rest of this paper is organized as follows. Section \ref{R_works} introduces the existing works. Section \ref{Sys_model} provides the system architecture and basic concepts of SAS-assisted spectrum sharing systems, and privacy metrics.
In section \ref{sec_DataRateMax}, we formulate a problem by utilizing the SU beamforming to maximize the data rate under the constraints of PU location privacy, and an efficient solution is given. 
In section \ref{Prob_formulation}, we focus on the second scenario to improve PU location privacy while satisfying the SU network throughput requirement.
We propose a two-step scheme to decouple the inter-dependency between beamformer design and privacy zone design in the optimization problem.
Section \ref{POIUIST} provides simulation results and Section \ref{Con_cl} concludes the paper. Some detailed proofs appear in the Appendix.

\vspace{-.2cm}
\section{Related work}
\label{R_works}
\vspace{-.1cm}

{Location privacy protection in modern networks is receiving more attentions. To investigate the 3D geolocation protection, the impact factors of the degree of indistinguishability of 3D geolocations is studied in \cite{min20213d}. To achieve the tradeoff between the privacy-preserving level and user utility, a bilateral privacy preservation framework is introduced in \cite{liu2021bilateral}, where a game-theoretic approach based on the Stackelberg model is proposed.} {To protect the sensitive semantic location privacy in VANETs, a reinforcement learning (RL) based scheme is proposed in \cite{min2021reinforcement}. Differential privacy is utilized in this scheme to randomize the released vehicle locations and adaptively selects the perturbation policy.} To protect PUs' location privacy in spectrum sharing systems, many pioneering works have been proposed in the existing literature. 
For example, in \cite{clark2016can}, the authors discussed several attack models and PUs’ obfuscation
strategies. The obfuscation strategies, conducted in the database, include false PU entries insertion and PU parameter noise injection.
In \cite{bahrak2014protecting,zhang2015optimal}, k-anonymity based approaches have been proposed to generate the privacy zone, where k PUs are cloaked in a larger protected contours (privacy zone). 
When PUs are far away from each other, the proposed approach can lead to over-sized privacy zone, which severely reduces the spectrum utilization.
The k-clustering approach is then proposed to divide PUs into k clusters based on their distance. 
To counter the disadvantage of k-anonymity, the authors in \cite{rajkarnikar2017location} proposed to apply the pareto approach to maximize the number of dummy PUs while making EZs as smaller as possible. {Various metrics have been proposed to evaluate the PU location privacy. For instance, in \cite{clark2016can}, the metric is developed to evaluate the expected parameter estimation error. In this case, privacy can be measured directly from this adversary's estimated distribution. In \cite{clark2017trading},  one metric is developed to evaluate the size of the search area for the attackers. The search area consists of the search radius and region area. This search radius reflects the minimum distance around each estimated location the adversary would have to search in order to intercept the actual PU locations. In \cite{bahrak2014protecting}, the authors propose three metrics that focus on different aspects: uncertainty, inaccuracy, and incorrectness. The gap between the SAS’s knowledge and adversarial’s knowledge is evaluated.} What's more, several cryptographic techniques were proposed in \cite{chen2017towards,li2018preserving,dou2017preserving} to design the secure protocol for database-driven spectrum sharing systems. Relying on the techniques of secure computation, the spectrum allocation process is performed over ciphertext
based on homomorphic encryption. 

Unfortunately, the aforementioned works do not consider the interference from SU to PUs and {the trade-off between user privacy and spectrum utility is not covered}.
In \cite{clark2018achievable,clark2020optimizing,salamal2020privacy}, the authors assess this trade-off with both sensing and interface obfuscation approaches in a spectrum sharing system. For instance, in \cite{clark2020optimizing}, the authors formulate the multi-utility user privacy optimization problem. 
The PUs' privacy is measured by exposure to potential adversary inference
attacks. In \cite{salama2019trading}, to guarantee that the interference level remains under required limits for PUs, the authors propose a three-way trade-off among privacy, interference, and utility to systematically study its performance under various conditions. A concept of privacy zone has been proposed to formalize the relationship between privacy and spectrum efficiency. 

The aforementioned existing works have not explored interference mitigation ability of beamforming in SU networks.
This motivates us to focus on enhancing the location privacy of PUs with the beamforming design in SU networks.
We would like to emphasize that our work is substantially different from the objective of throughput maximization using conventional beamformings  \cite{huang2011robust,nasir2017secure}. 
We propose a new beamforming design scheme to mitigate the aggregate interference to PUs in order to strikes a balance between PU location privacy and the SU networks spectrum efficiency.
\vspace{-.2cm}
\section{System Model}
\label{Sys_model}
\vspace{-.1cm}

In this section, we begin by introducing the SAS-assisted spectrum sharing systems, specifying its architecture and major components.
We then review the privacy metrics utilized to design the privacy zone. We also discuss the beamforming techniques and the expression of the received signals. 

\vspace{-.3cm}
\subsection{System Architecture}
\vspace{-.1cm}

In this work, we consider a SAS-assisted spectrum access system, which includes the SAS, PUs, SU networks, and environment sensing capability (ESC) \cite{palola2017field} as shown in Fig. \ref{fig:Architecture}. 
SAS is a centralized database, which is responsible for authenticating SUs and managing interference at PUs  based on the registered information of PUs and SUs. PUs originally own the licensed spectrum, which can be shared with SUs, and therefore have priorities to access the spectrum. We consider SU networks that contain secondary receivers (SRs) and multi-antenna secondary transmitter (STs). To detect the presence of PUs, ESC is the sensor network providing the sensing capability.

In order to ensure the priority of active PUs for the spectrum access, SUs must access the spectrum in a manner that the interference from SUs to PUs is restricted. Hence, SUs that are geographically close to the active PUs have to be excluded from transmission when PUs are in operation. For instance, the SAS establishes an exclusion zone (EZ), indicated by the rectangle with red dashed lines in Fig. \ref{fig:System_model}, inside which no SU transmissions are allowed. 
Only the SUs outside the EZ are permitted to operate on the PU's band. EZ is not the only option to exclude SU nodes. We have to emphasis that for the schemes performing the SU exclusion without EZ, privacy zone also can be established and the proposed scheme also applies. 
We consider $M$ cells along the edge of EZ, which will introduce non-negligible interference to PUs. 
In each cell, there is one secondary transmitter (ST), e.g. base station (BS), and multiple secondary receivers (SRs), e.g., user equipment (UE).

\vspace{-.5cm}
\begin{figure}[htbp]
    \centering
    \subfigure[SAS-assisted Spectrum Sharing Systems]{ 
    \includegraphics[width=0.7\linewidth,height=0.5\linewidth]{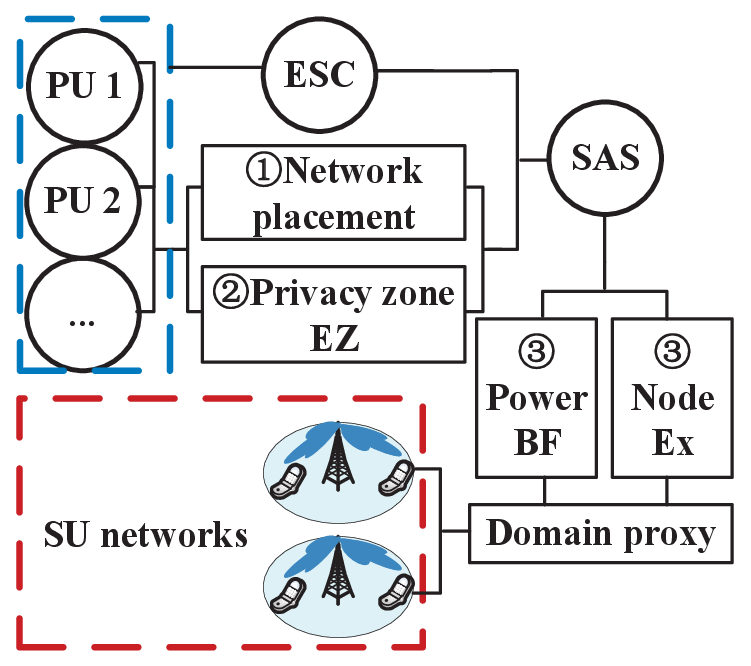}
    \label{fig:Architecture}
    }
    \subfigure[EZ and Privacy Zone.]{  \includegraphics[width=0.7\linewidth,height=0.5\linewidth]{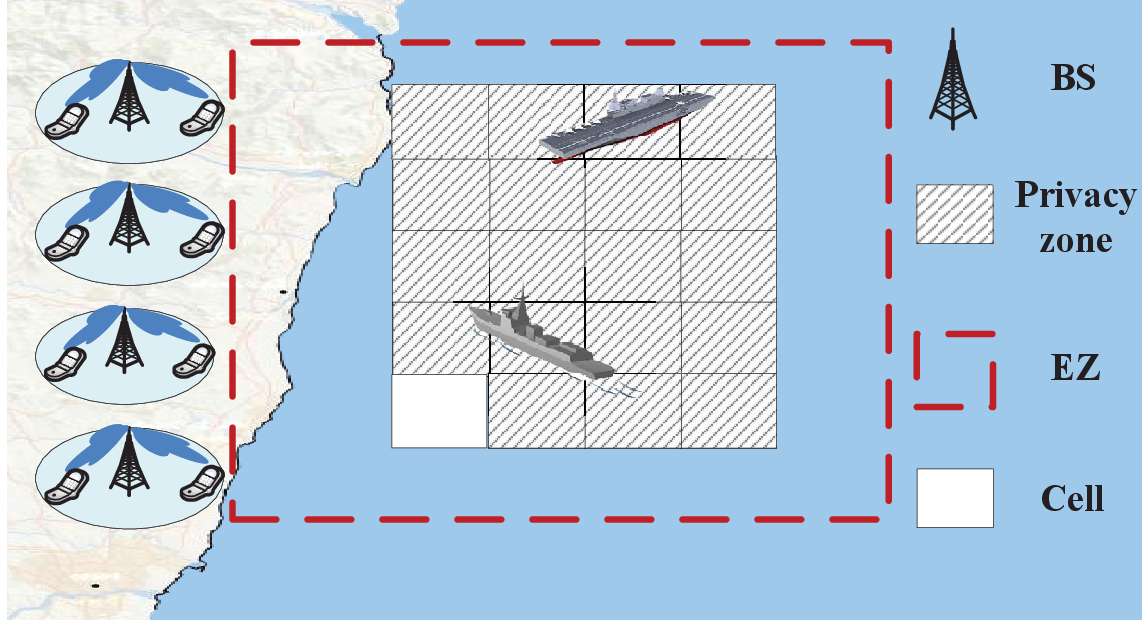}
    \label{fig:System_model}
    }
    \caption{System Model.}
    \vspace*{-.5cm}
\end{figure}

\vspace{-.2cm}
\subsection{Privacy Zone and Privacy Metrics}
\label{UOWETS}

As illustrated in in Fig. \ref{fig:System_model}, a privacy zone, containing multiple grey square cells, has been established \cite{salama2019trading,bahrak2014protecting,zhang2015optimal}. Privacy zone (denoted by the combination of grey square cells) locates inside of the EZ (bounded by the red dash line). The location privacy of PUs hiding in privacy zone is protected by enabling PUs to locate at any grey cells in the privacy zone. For instance, the aggregate interference from STs at each cell in the privacy zone is subject to the same threshold. 
Therefore, from an inference attacker's point of view, the PU has the equal probability to reside in any cell in the privacy zone.

To quantify the PU's location privacy, there are a variety of metrics proposed in the existing literature \cite{clark2017trading,clark2018achievable,salama2019trading}. Without loss of generality, we choose the entropy-based privacy metric, which widely adopted in previous works \cite{salama2019trading,bahrak2014protecting}. Note that our methodology can also be generalized to other privacy metrics in a straightforward manner. 

For a given privacy zone $\mathcal{C}^{\text{PZ}}$, we denote one cell in $\mathcal{C}^{\text{PZ}}$ as $c_{i,j}$. 
Here $(i,j)$ denotes the coordinate of $c_{i,j}$ in $\mathcal{C}^{\text{PZ}}$. 
According to the definition of privacy zone, the probability of PU residing in any $c_{i,j}\in\mathcal{C}^{\text{PZ}}$ is equal from the attacker's perspective. 
Therefore, the probability that one cell $c_{i,j}\in \mathcal{C}^{\text{PZ}}$ is the real location of PUs can be given as $\text{Pr}(\text{PU in~} c_{i,j})=\frac{1}{|{\mathcal{C}^{\text{PZ}}}|}$.
Here $|{\mathcal{C}^{\text{PZ}}}|$ denotes the size of the privacy zone. 
Given the probability above, the entropy of PU location (in bits) can serve as the privacy metric to evaluate the uncertainty level of PU locations, which can be expressed as:
\vspace{-.2cm}
\begin{eqnarray}
H(\mathcal{C}^{\text{PZ}})&=&-\sum_{c_{i,j}\in{\mathcal{C}^{\text{PZ}}}}\text{Pr}(\text{PU in~}{c_{i,j}})\log\Big(\text{Pr}(\text{PU in~}{c_{i,j}})\Big)\nonumber\\
&=& \log(|\mathcal{C}^{\text{PZ}}|).
\label{SSSS_9}
\vspace{-0.5cm}
\end{eqnarray}

Please note, from Eq.~(\ref{SSSS_9}), we can observe that the PU location privacy is logarithmically proportional to the area of the privacy zone because the logarithm function in Eq.~(\ref{SSSS_9}) is monotonically increasing as $|\mathcal{C}^{\text{PZ}}|$ does, where $|\mathcal{C}^{\text{PZ}}|$ stands for the number of cells contained in the privacy zone or the size of $\mathcal{C}^{\text{PZ}}$. Therefore, a larger privacy zone (or a larger $|\mathcal{C}^{\text{PZ}}|$) can increase the uncertainty of PUs' location and have the resilience against inference attacker.

\vspace{-.5cm}
\subsection{Beamforming}
\vspace{-.2cm}

Beamforming can be utilized to obtain sufficient signal-to-noise-ratio (SNR) by concentrating the signal to a certain direction, thus compensate path-loss and save transmission power. 
We assume phased antenna array is used for beamforming at the ST. 
By tuning the phase of the signals on antenna elements in the antenna array, the transmitted signals can be targeted at a specific angular domain.

Let $\mathcal{M} = \{1, 2,...,M\}$ and $\mathcal{N}_m = \{1, 2,...,N_m\}$ ($m\in\mathcal{M}$) represent
the ST set and the corresponding SR set, respectively. Let $\mathcal{P} = \{1, 2,...,P\}$ represents the set indexing $P$ active PUs, which is registered at the SAS. Each multi-antenna ST is assumed to be equipped with an uniform linear array (ULA) antenna with $k$ antenna elements to serve $\mathcal{N}_m$ single-antenna SRs. We utilize a pair $(m,n)$ to denote the transmission link from ST $m$ to SR $n$ ($n\in\mathcal{N}_m$). Let $\tilde{\bf{h}}_{m,n}\in \mathbb{C}^{k\times 1}$ denote the channel coefficients for the link $(m,n)$. 
${\bf{w}}_{m,n}\in \mathbb{C}^{k\times 1}$ denotes the beamformer applied at ST $m$ for the link $(m, n)$.
Considering a multi-user MIMO (MU-MIMO) communication scenario, the received signal at SR n consists of the desired signal coming from its
associated ST $m$, the intra-cell interference from other links $(m, n_1)$ (where $n_1\neq n$ and $n_1\in\mathcal{N}_m$), and the inter-cell interference from adjacent ST $m_1$, which can be expressed as:
\begin{flalign}
y_{m,n}=\tilde{\bf{h}}_{m,n}^H{\bf{w}}_{m,n}s_{m,n}+\underbrace{\sum_{n_1\neq n,~n_1\in\mathcal{N}_m}\tilde{\bf{h}}_{m,n}^H{\bf{w}}_{m,n_1}s_{m,n_1}}_{I_{m,n}^{\text{intra}}~(\text{intra-cell interference})}\nonumber\\
+\underbrace{\sum_{n_1\in\mathcal{N}_{m_1}~,m_1\neq m,~m_1\in\mathcal{M}}\tilde{\bf{h}}_{m_1,n}^H{\bf{w}}_{m_1,n_1}s_{m_1,n_1}}_{I_{m,n}^{\text{inter}}~(\text{inter-cell interference)}}+z_{m,n},
\label{PSOEIOTE}
\end{flalign}
where $s_{m,n}\in \mathbb{C}$ is the unit-power transmitted symbol for the link $(m,n)$ such that  $\mathbb{E}[|s_{m,n}|^2]=1$.
The power budget is given as $\sum_{n\in\mathcal{N}_{m}}u_{m,n}||{\bf{w}}_{m,n}||^2\leq P_m$. The binary variable ${{u}}_{m,n}$ indicates the on/off status of SU link $(m,n)$, i.e., ${u}_{m,n}=1$ indicates the link $(m,n)$ is on/activated on the PU spectrum, and vice versa. $z_{m,n}$ denotes the additive circular symmetric complex Gaussian noise with zero mean and variance $\sigma_{m,n}^2$.
\vspace{-.2cm}
\subsection{Aggregate Interference}

Denote the interference threshold for each PU in $\mathcal{P} = \{1, 2,...,P\}$ as $\{I_{\text{th},1}, I_{\text{th},2},...,I_{\text{th},P}\}$. At first, for each cell in the privacy zone, its aggregate interference threshold cannot reveal the PU information and a unified interference $I_{\text{th}}=\min \{I_{\text{th},1}, I_{\text{th},2},...,I_{\text{th},P}\}$ is thus specified. Second, $I_{\text{th}}$ is applied to every cell in the privacy zone, which requires the aggregate interference at every cell in $\mathcal{C}^{\text{PZ}}$ must be lower than or equal to the threshold $I_{\text{th}}$. For a group of PUs, $I_{\text{th}}$ is set to be the minimum interference threshold among PUs. We denote the aggregate inference at cell $c_{i,j}$ as $f_1^{i,j}({\bf{U}},{\bf{W}})$ under a SU network configuration tuple, i.e., $({\bf{U}},{\bf{W}})$, where ${\bf{U}}=\{u_{m,n} ~|~ u_{m,n}=0~or~1,~ m\in\mathcal{M},n\in\mathcal{N}_m\}$ and ${\bf{W}}=\{{\bf{w}}_{m,n} ~|~m\in\mathcal{M},n\in\mathcal{N}_m\}$. We can express the aggregate interference in dBm \cite{sss2013NIIT,locke2010assessment} as (\ref{Interference_threshold}).
\vspace{-.1cm}
\begin{flalign}
f_1^{i,j}({\bf{U}},{\bf{W}}) =10~\text{log}_{10}(\sum_{m,n}\tilde{I}_{m,n}^{i,j}u_{m,n}||{\bf{w}}_{m,n}||^2 G(\theta_{m,n}^{i,j})),\label{f1_interference}\\ 
f_1^{i,j}({\bf{U}},{\bf{W}})+30\leq I_{\text{th}}, \label{Interference_threshold}
\vspace{-.1cm}
\end{flalign}
{$\tilde{I}_{m,n}^{i,j}$ contains the interference terms such as the insertion loss and propagation loss at cell $c_{i,j}$ incurred by the transmission from ST $m$ to its corresponding SR $n$ \cite{locke2010assessment}. Because the signal traveling distance is usually around tens of kilometers, each point inside the small cell $c_{i,j}$ experiences the similar large scale fading and we thus approximate its the propagation loss by using the coordinates of the centre point in $c_{i,j}$.} Please find the detailed definition of $\tilde{I}_{m,n}^{i,j}$ in \cite{locke2010assessment}. Here $G(\theta_{m,n}^{i,j})$ is the antenna gain determined by the beamforming design for antenna arrays. Its detailed expression is given in Appendix I.

In the following two sections, we will investigate the following two problems: (i) improving SU network throughput based on beamforming techniques where the PU location privacy requirement serves as a constraint;
(ii) enhancing PU location privacy via the privacy zone design given the SU throughput constriants. For each problem, we develop the efficient algorithm as the solution.

\vspace{-.2cm}
\section{SU Network Throughput Maximization with PU Location Privacy Constraint}
\label{sec_DataRateMax}

In this section, we will consider the first scenario to maximize the sum data rate under the specified PU location privacy constraint, i.e. given $\mathcal{C}^{\text{PZ}}$. By jointly optimizing the SU node selection (i.e. ${\bf{U}}$) and beamformer design (i.e. ${\bf{W}}$), the optimization problem is formulated as
\vspace{-.05cm}
\begin{subequations}
\label{Prob3_all}
\begin{align}
\mathop {\max }\limits_{{\bf{U}},{\bf{W}}}\quad & \sum_{m\in \mathcal{M},~n\in \mathcal{N}_m} u_{m,n} \log_2\Big(1+\frac{|[\tilde{\bf{h}}_{m,n}^t]^H{\bf{w}}_{m,n}|^2}{I^{\text{inter}}_{m,n}+I^{\text{intra}}_{m,n}+\sigma_{m,n}^2}\Big)
\label{Prob3_obj}\\
\text{s.t.}\quad 
& \big(f_1^{~i,j}({\bf{U}},{\bf{W}})+30\big)\leq I_{th},~\forall c_{i,j}\in\mathcal{C}_{r}^{\text{PZ}},\label{Prob3_con2}
\\
& \log_2\Big(1+\frac{|[\tilde{\bf{h}}_{m,n}^t]^H{\bf{w}}_{m,n}|^2}{I^{\text{inter}}_{m,n}+I^{\text{intra}}_{m,n}+\sigma_{m,n}^2}\Big)\geq u_{m,n}r_{m,n},~\nonumber\\
&\forall m\in \mathcal{M},~n\in \mathcal{N}_m,\label{Prob3_con2_2}\\
& \sum_{n\in\mathcal{N}_m}u_{m,n}||{\bf{w}}_{m,n}||^2\leq P_{m},~\forall m\in\mathcal{M},\label{Prob3_con3_1}\\
&u_{m,n}\in\{0,1\},~\forall m\in\mathcal{M},~n\in \mathcal{N}_{m}.
\label{Prob3_con5}
\end{align}
\end{subequations}
Objective (\ref{Prob3_obj}) aims to maximize the SU network sum data rate, which is the summation of all the active links. Constraint (\ref{Prob3_con2}) sets a limit on the aggregate interference at each cell $c_{i,j}$ in the given $\mathcal{C}_{r}^{PZ}$.
Constraint (\ref{Prob3_con2_2}) puts a minimum data rate $r_{m,n}$ on an active link $(m,n)$ given the CSI estimation $\tilde{\bf{h}}_{m,n}^t$ at time slot $t$.
If there is no feasible solution satisfying this QoS constraint, SU link $(m,n)$ must be deactivated on the PU band (i.e.,  by setting $u_{m,n}=0$). In (\ref{Prob3_con3_1}), the power of each ST is upper bounded. A binary variable $u_{m,n}$ is introduced in (\ref{Prob3_con5}) to indicate the on/off status of link $(m,n)$. 

The problem (\ref{Prob3_all}) is a mixed combinatorial non-convex
optimization problem. In specific, the binary constraints in (\ref{Prob3_con5}) result in the
combinatorial effect while the non-convexity arises in the QoS constraint (\ref{Prob3_con2_2}). 
In general, there is no computational efficient approach to solve (\ref{Prob3_all}) optimally. 
In \cite{wei2017optimal,ng2016power}, the non-convex multi-linear production term has been addressed. However, the techniques developed in \cite{wei2017optimal,ng2016power} cannot be directly applied to our problem since they did not consider the beamformer design. 
In subsequent subsections, we will develop  efficient transformation and approximation techniques to handle such non-convexity and give the corresponding convex/concave expressions. Please find the corresponding theoretical derivations related to the transformation and approximation in subsection \ref{OPSE1}, \ref{OPSE2}, and \ref{OPSE3}. The solution is presented in subsection \ref{OPSE4}.

\vspace{-.2cm}
\subsection{Transformation for Constraint (\ref{Prob3_con2})}
\label{OPSE1}
\vspace{-.1cm}

Constraint (\ref{Prob3_con2}) contains the non-convex multiplication term and thus is intractable for the optimization solvers. 
To make (\ref{Prob3_con2})  tractable, the logarithm expression on the left is removed at first by having the following constraint:
\vspace{-.2cm}
\begin{eqnarray}
\sum_m\sum_{n=1}^{|\mathcal{N}_{m}|}\tilde{I}_{m,n}^{i,j}G(\theta_{m,n}^{i,j})||{\bf{w}}_{m,n}||^2 u_{m,n}\leq {\bar{I}}_{th},~\forall c_{i,j}\in\mathcal{C}^{\text{PZ}},
\label{SIEOJTONCM}
\end{eqnarray}
where ${\bar{I}}_{th}=10^{(I_{th}/10-3)}$ is expressed in mWatts. Then, to decouple the term $G(\theta_{m,n}^{i,j})||{\bf{w}}_{m,n}||^2u_{m,n}$, we introduce an auxiliary variable $p_{m,n}$ by setting $||{\bf{w}}_{m,n}||^2\leq p_{m,n}$, which represents the power allocated to $u_{m,n}$. $p_{m,n}u_{m,n}$ can be transformed by having ${\bar{p}}_{m,n}=p_{m,n}u_{m,n}$, where $p_{m,n}^{\text{min}}\leq p_{m,n}\leq p_{m,n}^{\text{max}}$, and $p_{m,n}^{\text{min}}$ and $p_{m,n}^{\text{max}}$ are the lower and upper limit for the transmission power for $(m,n)$. We can further transform ${\bar{p}}_{m,n}$ according to the Big-M formulation \cite{wei2017optimal,ng2016power} by setting ${\tilde{p}}_{m,n}={\bar{p}}_{m,n}-p_{m,n}^{\text{min}}$. To this end, constraint (\ref{SIEOJTONCM}) can be transformed as follows
\begin{eqnarray}
&&{ {\bar{I}}_{th}- \sum_{m,n}\tilde{I}_{m,n}^{i,j}G(\theta_{m,n}^{i,j})\big({\tilde{p}}_{m,n}+{{u}}_{m,n}{{p}}_{m,n}^{\text{min}}\big)}{\geq {d}_{i,j}},\nonumber\\
&&~~\text{where }{|d_{i,j}|}{\leq D_{i,j}^{max}},\label{Trans_7b_1}\\
&&{{\tilde{p}}_{m,n}}{\leq {{u}}_{m,n}\big({{p}}_{m,n}^{\text{max}}-{{p}}_{m,n}^{\text{min}} \big),\forall ~m,n},\label{Trans_7b_2}\\
&&{{\tilde{p}}_{m,n}}{\geq p_{m,n}-(1-{{u}}_{m,n})\big({{p}}_{m,n}^{\text{max}}-{{p}}_{m,n}^{\text{min}} \big)},\label{Trans_7b_3}\\
&&{{\tilde{p}}_{m,n}\geq 0,}~{{\tilde{p}}_{m,n}}\leq p_{m,n}. \label{Trans_7b_4}
\end{eqnarray}
In (\ref{Trans_7b_1}), to enable a feasible expression for the relaxed transformation, we further relax this constraint by introducing a real variable $d_{i,j}$ with the range ${|d_{i,j}|}{\leq D_{i,j}^{max}}$. $d_{i,j}$ is equivalent to the cell feasibility indicator $v_{i,j}$ by simply having $v_{i,j}=\text{max}[\text{sgn}(d_{i,j}),0]$, where $\text{sgn}(\cdot)$ is a sign function extracting the sign of real variables. Constraints (\ref{Trans_7b_2})-(\ref{Trans_7b_4}), are imposed here to meet the requirement for the Big-M formulation.

Note that $G(\theta_{m,n}^{i,j})$ in (\ref{Trans_7b_1}) is still non-convex. By analyzing the detailed expression of $G(\theta_{m,n}^{i,j})$ in Appendix I, the following property holds for $G(\theta_{m,n}^{i,j})$ given any feasible point.
\vspace{-.5cm}
\begin{lem}
The inner approximation for antenna gain $G(\theta_{m,n}^{i,j})$ can be given as
\vspace{-.2cm}
\begin{eqnarray}
G(\theta_{m,n}^{i,j})\leq \frac{{\bf{w}}_{m,n}^H({\bf{v}}_{\theta})_{m,n}^{i,j}({{\bf{v}}_{\theta}}_{m,n}^{i,j})^H{\bf{w}}_{m,n}}{2\mathcal{R}\Big\{ (({\bf{w}}_{m,n}^l)^H\tilde{\bf{v}})^H({\bf{w}}_{m,n}^H\tilde{\bf{v}})\Big\} -|({\bf{w}}_{m,n}^l)^H\tilde{\bf{v}}|^2}. 
\label{FS11_RSYSY}
\end{eqnarray}
\label{FS11_GSHTY}
\end{lem}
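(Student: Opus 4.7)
The plan is to derive the claimed upper bound by exploiting the convex quadratic structure of the denominator in the antenna-gain ratio defined in Appendix~I. Based on the expression referenced there, $G(\theta_{m,n}^{i,j})$ is a ratio whose numerator is the power radiated toward cell $c_{i,j}$, namely the Hermitian quadratic form ${\bf{w}}_{m,n}^H({\bf{v}}_{\theta})_{m,n}^{i,j}(({\bf{v}}_{\theta})_{m,n}^{i,j})^H{\bf{w}}_{m,n}=|(({\bf{v}}_{\theta})_{m,n}^{i,j})^H{\bf{w}}_{m,n}|^2$, and whose denominator encodes the average/total radiated power through a convex quadratic of the form $|{\bf{w}}_{m,n}^H\tilde{\bf{v}}|^2={\bf{w}}_{m,n}^H\tilde{\bf{v}}\tilde{\bf{v}}^H{\bf{w}}_{m,n}$. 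The numerator in the lemma already matches this form, so nothing needs to be done to it; the whole game is to majorize $G(\theta_{m,n}^{i,j})$ by minorizing its denominator with a convex (in fact affine) surrogate, which is exactly the successive convex approximation (SCA) trick used throughout the paper.

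First, I would apply the standard first-order expansion of the convex function $\phi({\bf{w}})=|{\bf{w}}^H\tilde{\bf{v}}|^2$ around the current feasible iterate ${\bf{w}}_{m,n}^l$. Using Wirtinger calculus, $\nabla_{{\bf{w}}^*}\phi({\bf{w}})=\tilde{\bf{v}}\tilde{\bf{v}}^H{\bf{w}}$, and the global underestimator property of convex functions gives
\begin{equation}
|{\bf{w}}_{m,n}^H\tilde{\bf{v}}|^2 \;\geq\; 2\mathcal{R}\bigl\{(({\bf{w}}_{m,n}^l)^H\tilde{\bf{v}})^H({\bf{w}}_{m,n}^H\tilde{\bf{v}})\bigr\}-|({\bf{w}}_{m,n}^l)^H\tilde{\bf{v}}|^2, \nonumber
\end{equation}
which is precisely the affine expression appearing in the denominator of the claimed bound. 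The equivalence between the inner-product form $(({\bf{w}}_{m,n}^l)^H\tilde{\bf{v}})^H({\bf{w}}_{m,n}^H\tilde{\bf{v}})$ and the Hermitian form ${\bf{w}}_{m,n}^H\tilde{\bf{v}}\tilde{\bf{v}}^H{\bf{w}}_{m,n}^l$ follows from observing that the scalar conjugate transpose $(\cdot)^H$ is just complex conjugation, so $(({\bf{w}}_{m,n}^l)^H\tilde{\bf{v}})^H=\tilde{\bf{v}}^H{\bf{w}}_{m,n}^l$.

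Second, I would plug this lower bound on the denominator back into $G(\theta_{m,n}^{i,j})$. Because the numerator $|(({\bf{v}}_{\theta})_{m,n}^{i,j})^H{\bf{w}}_{m,n}|^2$ is nonnegative and the denominator is positive (see caveat below), replacing the denominator with a smaller-or-equal affine quantity only increases the ratio, yielding the stated upper bound in \eqref{FS11_RSYSY}. Equality holds at ${\bf{w}}_{m,n}={\bf{w}}_{m,n}^l$, which is the key property SCA needs so that iteratively tightening the surrogate cannot worsen the objective.

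The main technical obstacle is ensuring the positivity of the linearized denominator throughout the iterates: the affine minorizer $2\mathcal{R}\{(({\bf{w}}_{m,n}^l)^H\tilde{\bf{v}})^H({\bf{w}}_{m,n}^H\tilde{\bf{v}})\}-|({\bf{w}}_{m,n}^l)^H\tilde{\bf{v}}|^2$ is not automatically positive for all ${\bf{w}}_{m,n}$, even though the true $|{\bf{w}}_{m,n}^H\tilde{\bf{v}}|^2$ is. I would handle this the usual way by (i) choosing the initial ${\bf{w}}_{m,n}^l$ so that $({\bf{w}}_{m,n}^l)^H\tilde{\bf{v}}\neq 0$, guaranteeing tangency at a positive value, and (ii) relying on the trust-region / step-control intrinsic to the outer SCA loop so that successive iterates remain in the half-space where the linearization is positive. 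A secondary subtlety worth flagging in the write-up is the bookkeeping in Wirtinger calculus: both ${\bf{w}}_{m,n}$ and ${\bf{w}}_{m,n}^*$ must be treated as independent variables when computing the gradient of $\phi$, and the factor of two in front of $\mathcal{R}\{\cdot\}$ comes from combining the contributions of the derivatives with respect to ${\bf{w}}_{m,n}$ and ${\bf{w}}_{m,n}^*$. Once these points are handled, the inequality \eqref{FS11_RSYSY} is immediate.
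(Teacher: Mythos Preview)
Your approach is essentially the paper's: linearize the convex quadratic denominator of $G(\theta_{m,n}^{i,j})$ via its first-order Taylor expansion at the current iterate ${\bf w}_{m,n}^l$ and use the resulting affine lower bound to majorize the ratio. The only piece you should add is the identification of $\tilde{\bf v}$: in the paper the denominator is ${\bf w}_{m,n}^H{\bf A}{\bf w}_{m,n}$ with ${\bf A}$ the PSD matrix from Appendix~I, and $\tilde{\bf v}$ is defined as the matrix square-root factor ${\bf Q}{\bf \Gamma}^{1/2}$ from the eigendecomposition ${\bf A}={\bf Q}{\bf \Gamma}{\bf Q}^H$, so $\tilde{\bf v}$ is a matrix rather than a single vector (your ``scalar conjugate transpose'' remark needs adjusting, but the linearization inequality you wrote carries over verbatim with $|\cdot|^2$ read as the squared Euclidean norm).
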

\vspace{-.2cm}
\begin{proof}
The denominator in (\ref{FS11_RSYSY}) is derived based on the following results:
\begin{eqnarray}
{{\bf{w}}_{m,n}^H{\bf{A}}{\bf{w}}_{m,n}}&\geq&
2\mathcal{R}\Big\{ (({\bf{w}}_{m,n}^l)^H\tilde{\bf{v}})^H({\bf{w}}_{m,n}^H\tilde{\bf{v}})\Big\}\nonumber\\ 
&-&|({\bf{w}}_{m,n}^l)^H\tilde{\bf{v}}|^2,
\label{YUSUI_89SDE}
\vspace{-.1cm}
\end{eqnarray}
(\ref{FS11_RSYSY}) has a quadratic-over-linear expression, which is convex for the postive denominator term in (\ref{FS11_RSYSY}). Please find the expression of ${\bf{A}}$ in (\ref{SVCTR_34}) of Appendix I. Note that ${\bf{A}}=\tilde{\bf{v}}^H\tilde{\bf{v}}$ and $\tilde{\bf{v}}=({\bf{Q}}{\bf{\Gamma}}^{\frac{1}{2}})$, which can be derived from the fact that matrix ${\bf{A}}$ is a positive semi-definite (PSD) matrix and ${\bf{A}}=({\bf{Q}}{\bf{\Gamma}}^{\frac{1}{2}})({\bf{Q}}{\bf{\Gamma}}^{\frac{1}{2}})^H$. Here the matrix ${\bf{Q}}$ and ${\bf{\Gamma}}$ contain the eigenvectors and eigenvalues.
\end{proof}
\vspace{-.2cm}
We now use $G(\theta_{m,n}^{i,j})_{\text{inner}}$ to denote the right-hand term in (\ref{FS11_RSYSY}). Based on the transformed ${\tilde{p}}_{m,n}$ and Lemma \ref{FS11_GSHTY}, the multiplication term in (\ref{Prob3_con2}) is transformed into $G(\theta_{m,n}^{i,j})_{\text{inner}}~{\tilde{p}}_{m,n}$, which is still intractable and will be dealt with later.

\vspace{-.2cm}
\subsection{Transformation for Constraint (\ref{Prob3_con2_2})}
\label{OPSE2}

Constraint (\ref{Prob3_con2_2}) is the QoS requirement on each link $(m,n)$. This is a non-concave constraint as well and cannot be solved efficiently. To find a tractable formulation for (\ref{Prob3_con2_2}), we will develop a concave formulation for SU data rate at first and address the integer variable $u_{m,n}$ later. We propose an inner approximation bound for the SU data rate in Lemma \ref{STPOSEICE}.
\begin{lem}
For constraint (\ref{Prob3_con2_2}), the following inequality holds for any feasible point ${\bf{w}}_{m,n}^l$.
\vspace{-.25cm}
\begin{eqnarray}
\log_2\Big(1+\frac{|[\tilde{\bf{h}}_{m,n}^t]^H{\bf{w}}_{m,n}|^2}{I_{m,n}^{\text{inter}}+I_{m,n}^{\text{intra}}+\sigma_{m,n}^2}\Big) \geq f_2({\bf{W}}^{l},{\bf{W}}),
\label{MNVBSSHST_version1}\\
f_2({\bf{W}}^l,{\bf{W}})= -\frac{1}{\ln{2}}\frac{b({\bf{w}}_{m,n}^l)}{(a({\bf{W}}^l)+b({\bf{w}}_{m,n}^l))a({\bf{W}}^l)}\nonumber\\
\times (\frac{a({\bf{W}})}{v({\bf{w}}^l_{m,n},{\bf{w}}_{m,n})}-\frac{a({\bf{W}}^l)}{b({\bf{w}}_{m,n}^l)}).
\end{eqnarray}
\label{STPOSEICE}
\end{lem}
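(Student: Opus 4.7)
Let $a(\mathbf{W})\!:=\!I^{\text{inter}}_{m,n}\!+\!I^{\text{intra}}_{m,n}\!+\!\sigma_{m,n}^2$ and $b(\mathbf{w}_{m,n})\!:=\!|\mathbf{h}^H\mathbf{w}_{m,n}|^2$ with $\mathbf{h}\!:=\!\tilde{\mathbf{h}}_{m,n}^t$, so that the left-hand side is $\log_2(1+b/a)$. This expression is neither convex nor concave in $\mathbf{W}$: $b$ is a convex quadratic and $a$ itself contains similar quadratic contributions from cross-link interference. My plan is to build a global concave minorant of the rate by chaining two standard one-sided inequalities anchored at the current iterate $\mathbf{W}^l$, and then absorbing all point-dependent constants into the single prefactor that appears in $f_2$.

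\emph{Step 1 (SCA minorization of the log).} Since $t\mapsto\log_2(1+e^{t})$ is convex, its tangent at $t^l\!=\!\ln(b^l/a^l)$ gives the well-known inequality
\[
\log_2\!\Big(1+\tfrac{b}{a}\Big)\;\ge\;\log_2\!\Big(1+\tfrac{b^l}{a^l}\Big)+\tfrac{b^l}{(a^l+b^l)\ln 2}\bigl[\ln(b/b^l)-\ln(a/a^l)\bigr],
\]
which is tight at $\mathbf{W}\!=\!\mathbf{W}^l$. The two logarithms on the right are then eliminated by the elementary inequalities $\ln x\ge 1-1/x$ (applied with $x=b/b^l$) and $\ln x\le x-1$ (applied with $x=a/a^l$), both tight at $x=1$. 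Collecting terms yields a minorant of the form $C_0-C_1/b-C_2\,a$, with positive constants $C_0,C_1,C_2$ depending only on $a^l,b^l$.

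\emph{Step 2 (affine under-estimator of $b$).} Following the same argument used in the proof of Lemma~\ref{FS11_GSHTY}, the convex quadratic $b(\mathbf{w}_{m,n})$ admits the affine lower bound
\[
v(\mathbf{w}^l_{m,n},\mathbf{w}_{m,n})\;:=\;2\mathcal{R}\!\left\{(\mathbf{h}^H\mathbf{w}^l_{m,n})^H(\mathbf{h}^H\mathbf{w}_{m,n})\right\}-|\mathbf{h}^H\mathbf{w}^l_{m,n}|^2\;\le\;b(\mathbf{w}_{m,n}),
\]
with equality at $\mathbf{w}_{m,n}\!=\!\mathbf{w}^l_{m,n}$. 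Because $b$ enters the Step~1 bound only through the term $-C_1/b$, substituting $b\to v\le b$ preserves the direction of the inequality (since $-1/b\ge -1/v$). Regrouping the resulting constants into the single multiplicative factor $-\tfrac{b^l}{\ln 2\,(a^l+b^l)a^l}$ in front of $\bigl(a/v-a^l/b^l\bigr)$ then delivers exactly the closed-form minorant $f_2(\mathbf{W}^l,\mathbf{W})$ stated in the lemma.

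The main obstacle is the bookkeeping across the chain: four one-sided bounds are composed (the convex-exponential tangent, $\ln x\ge 1-1/x$, $\ln x\le x-1$, and $v\le b$), and each must be applied on the correct side of the chain so that the inequalities line up and the residual additive constants—a $\log_2(1+b^l/a^l)$ from Step~1, a pair of $\pm 1$'s produced by the two $\ln$ inequalities, and an $a^l/b^l$ generated by the factoring—collapse into the single multiplicative prefactor of $f_2$. A secondary technical point is to confirm positivity of $v(\mathbf{w}^l_{m,n},\mathbf{w}_{m,n})$ inside the SCA trust region: only with $v>0$ is $a/v$ well-defined and, being the ratio of a convex quadratic over an affine positive function, convex in $\mathbf{W}$, which is exactly what is needed for $f_2$ to serve as a bona fide concave lower bound that can legally replace constraint~(\ref{Prob3_con2_2}) inside a tractable convex surrogate of problem~(\ref{Prob3_all}).
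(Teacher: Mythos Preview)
Your chain of four one-sided inequalities is a perfectly legitimate way to manufacture \emph{a} concave minorant of the rate, but it does \emph{not} reproduce the specific $f_2$ in the lemma. After your Steps~1--3 you arrive at
\[
\log_2\!\Big(1+\tfrac{b^l}{a^l}\Big)+\frac{b^l}{(a^l+b^l)\ln 2}\Big[\,2-\frac{b^l}{v}-\frac{a}{a^l}\,\Big],
\]
which depends on $a$ and on $1/v$ \emph{separately} (it is affine in $a$ and affine in $1/v$). The stated $f_2$, by contrast, contains the \emph{coupled} term $a(\mathbf{W})/v(\mathbf{w}^l,\mathbf{w})$, a quadratic-over-affine expression. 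No amount of ``regrouping the constants'' can turn $-b^l/v-a/a^l$ into $-a/v$; these are functionally different in $(a,v)$. So the final sentence of your Step~2, where you claim the bookkeeping collapses to $f_2$, is the gap.

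What the paper actually means by ``first-order convexity condition'' is a single linearization of $\phi(x)=\log_2(1+1/x)$, which is convex on $x>0$, at the inverse-SINR point $x^l=a^l/b^l$:
\[
\log_2\!\Big(1+\tfrac{b}{a}\Big)=\phi\!\Big(\tfrac{a}{b}\Big)\;\ge\;\phi(x^l)+\phi'(x^l)\Big(\tfrac{a}{b}-\tfrac{a^l}{b^l}\Big)
=\log_2\!\Big(1+\tfrac{b^l}{a^l}\Big)-\frac{(b^l)^2}{a^l(a^l+b^l)\ln 2}\Big(\tfrac{a}{b}-\tfrac{a^l}{b^l}\Big).
\]
Now (and only now) invoke your Step~2 affine under-estimator $v\le b$: since $a/b\le a/v$ and the prefactor is negative, replacing $b$ by $v$ preserves the inequality and produces exactly the $a/v-a^l/b^l$ structure of $f_2$. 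Your positivity caveat on $v$ is well taken and is needed here for the same reason you identified. (As an aside, the printed $f_2$ appears to drop the additive constant $\log_2(1+b^l/a^l)$ and one factor of $b^l$ in the prefactor; these are needed for tightness at $\mathbf{W}=\mathbf{W}^l$, which the paper relies on in Lemma~3. Your own bound \emph{is} tight at the iterate, which is another sign that it is a different minorant, not a rewriting of $f_2$.)
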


\begin{proof}
In (\ref{MNVBSSHST_version1}), $f_2({\bf{W}}^l,{\bf{W}})$ is the inner approximation for the data rate based on the first-order convexity condition where ${\bf{W}}^l=\{{\bf{w}}^l_{m,n}|~\forall~m\in\mathcal{M},~n\in\mathcal{N}_m\}$ and ${\bf{W}}=\{{\bf{w}}_{m,n}|~\forall~m\in\mathcal{M},~n\in\mathcal{N}_m\}$. The definitions of $a({\bf{W}})$, $b({\bf{w}}_{m,n}^l)$ and $v({\bf{w}}_{m,n}^l,{\bf{w}}_{m,n})$ are given by
\begin{eqnarray}
&&a({\bf{W}})=I_{m,n}^{\text{inter}}+I_{m,n}^{\text{intra}}+\sigma_{m,n}^2,\nonumber\\
&&~~b({\bf{w}}_{m,n}^l)=|[\tilde{\bf{h}}_{m,n}^t]^H{\bf{w}}_{m,n}^l|^2,\\
&&v({\bf{w}}_{m,n}^l,{\bf{w}}_{m,n})=-|[\tilde{\bf{h}}_{m,n}^t]^H{\bf{w}}_{m,n}^l|^2\nonumber\\ &&+2\mathcal{R}\Big\{ ({\bf{w}}_{m,n}^l)^H[\tilde{\bf{h}}_{m,n}^t][\tilde{\bf{h}}_{m,n}^t]^H{\bf{w}}_{m,n} \Big\},
\label{RTST_2}
\end{eqnarray}
It is easy to observe that $a({\bf{W}})$ has a quadratic form and $v({\bf{w}}_{m,n}^l,{\bf{w}}_{m,n})$ is a linear term. Therefore, we can prove the approximation in (\ref{MNVBSSHST_version1}) holds. 
\end{proof}

\vspace{-.2cm}
\subsection{Transformation for the Integer Variable $u_{m,n}$}
\label{OPSE3}

(\ref{Prob3_con5}) has a binary constraint on variable ${{u}}_{m,n}\in\{0,1\}$, which leads to a complex combinatorial optimization problem.
To reduce complexity, the similar techniques in \cite{ng2016power} are adopted to relax the binary variable into a real variable $\bar{{u}}_{m,n}\in[0,1]$, which is subject to the following constraints:
\vspace{-0.3cm}
\begin{eqnarray}
&& 0\leq{\bar{u}}_{m,n}\leq 1,\label{C4a_version1}\\
&& \sum_m\sum_n{\bar{u}}_{m,n}-\sum_m\sum_n{\bar{u}}_{m,n}^2\leq 0, ~\forall ~u_{m,n}\in{\bf{U}},\label{C4b_version1}
\end{eqnarray}
where ${\bar{u}}_{m,n}$ is a real variable after relaxation and (\ref{C4b_version1}) forces $u_{m,n}$ to approach to 0 or 1. Please note the left-hand term of the inequality in (\ref{C4b_version1}) is non-convex. In the following, we adopt the successive convex approximation techniques to find its global lower bound. For instance, let $f({\bar{u}}_{m,n})=\sum_m\sum_n{\bar{u}}_{m,n}-\sum_m\sum_n{\bar{u}}_{m,n}^2$. With given location point ${\bar{u}}_{m,n}^l$ in the $l$-th iteration, we can obtain the following lower bound for $f({\bar{u}}_{m,n})$:
\begin{eqnarray}
f_3({\bar{u}}_{m,n}^l,{\bar{u}}_{m,n})= f({\bar{u}}_{m,n})\nonumber\\
+\{\sum_m\sum_n(1-2{\bar{u}}_{m,n})\}({\bar{u}}_{m,n}-{\bar{u}}_{m,n}^{~l}).
\label{RTSYSY_version1}
\end{eqnarray}

$f_3({\bar{u}}_{m,n}^l,{\bar{u}}_{m,n})$ can be incorporated into the objective function (\ref{Prob3_obj}) by augmenting $\eta f_3({\bar{u}}_{m,n}^l,{\bar{u}}_{m,n})$ into the objective function, where $\eta << -1$ is a negative penalty factor.

\vspace{-.2cm}
\subsection{Transformation for Objective (\ref{Prob3_obj})}
\label{OPSE4}

Objective (\ref{Prob3_obj}) contains the non-convex SU data rate $\log_2(\cdot)$ and the multiplication term $u_{m,n} \log_2(\cdot)$, which have to be tackled. For the SU data rate, we can directly utilize the approximation for the SU data rate in \textbf{Lemma \ref{STPOSEICE}}. The multiplication term $u_{m,n} \log_2(\cdot)$ will be dealt with later. Please note besides the original SU data rate, the transformed objective has to contain several terms developed for the aforementioned transformation, i.e., the penalty term $\xi \sum_{i,j} \big[\max(0,-d_{i,j})\big]^2$ and the penalty for the binary relaxation $\eta \sum_{m,n} f_3({\bar{u}}_{m,n}^l,{\bar{u}}_{m,n})$.

Until now, most of non-convex/concave constraints except the coupled multiplication terms in (\ref{SIEOJTONCM}) have been transformed. However,
we notice that when either ${\bf{W}}$ or the remaining variables $\{\tilde{\bf{P}},{\bf{P}},\bar{\bf{U}},{\bf{D}}\}$ is fixed, where $\tilde{\bf{P}}=\{\tilde{{p}}_{m,n} ~|~m\in\mathcal{M},n\in\mathcal{N}_m\}$, ${\bf{P}}=\{{{p}}_{m,n} ~|~m\in\mathcal{M},n\in\mathcal{N}_m\}$, $\bar{\bf{U}}=\{\bar{{u}}_{m,n} ~|~m\in\mathcal{M},n\in\mathcal{N}_m\}$, and ${\bf{D}}=\{{{d}}_{i,j} ~|~(i,j)\in\mathcal{C}_c\}$, the problem can be efficiently solved.
Motivated by this observation, we propose an alternating optimization (AO) based algorithm to
iteratively optimize ${\bf{W}}$ or the variables $\{\tilde{\bf{P}},{\bf{P}},\bar{\bf{U}},{\bf{D}}\}$ with the other being fixed until the convergence.

\paragraph{Optimizing $\{\tilde{\bf{P}},{\bf{P}},\bar{\bf{U}},{\bf{D}}\}$ for given ${\bf{W}}$} For the ease of presentation, for any constraint or function containing the fixed variables ${\bf{W}}$, we denote it as $\widehat{(\cdot)}$.
\begin{subequations}
\label{Prob32_all}
\begin{align}
{\text{(P1.1)}} \mathop {\max }\limits_{\tilde{\bf{P}},{\bf{P}},\bar{\bf{U}},{\bf{D}}}\quad & \sum_{m,n}\bar{u}_{m,n} \widehat{f_2}({\bf{W}}^l,{\bf{W}}) +\eta \sum_{m,n} f_3({\bar{u}}_{m,n}^l,{\bar{u}}_{m,n})\nonumber\\
&+{{\xi\sum_{i,j} \big[\max(0,-d_{i,j})\big]^2}}\nonumber\\
\text{s.t.}\quad
& {\bar{I}}_{th}- \sum_{m,n}\tilde{I}_{m,n}^{i,j}\widehat{G}(\theta_{m,n}^{i,j})_{\text{inner}}\nonumber\\ &\times~\big({\tilde{p}}_{m,n}+\bar{u}_{m,n}{{p}}_{m,n}^{\text{min}}\big){\geq {d}_{i,j}}, \label{WEIOTSCQ}\\
& \widehat{f}_2({\bf{W}}^{l},{\bf{W}})\geq \bar{u}_{m,n}r_{m,n},\label{SMSIOESEIO}\\
& \sum_{n\in\mathcal{N}_m}\tilde{p}_{m,n}\leq P_{m},~\forall m\in\mathcal{M},\label{POSIEOT}\\
&(\ref{Trans_7b_2})-(\ref{Trans_7b_4}),~(\ref{C4a_version1}).
\end{align}
\end{subequations}
\paragraph{Optimizing ${\bf{W}}$ for given $\{\tilde{\bf{P}},{\bf{P}},\bar{\bf{U}},{\bf{D}}\}$} For any constraint or function containing the fixed variables $\tilde{p}_{m,n}$, $p_{m,n}$, $\bar{u}_{m,n}$, and $d_{i,j}$, we denote it as $\widehat{(\cdot)}$.
\begin{subequations}
\label{Prob32_all}
\begin{align}
{\text{(P1.2)}~} \mathop {\max }\limits_{\bf{W}}\quad & \sum_{m,n}\widehat{\bar{u}}_{m,n} {f_2}({\bf{W}}^l,{\bf{W}}) +\eta \sum_{m,n} \widehat{f_3}({\bar{u}}_{m,n}^l,{\bar{u}}_{m,n})\nonumber\\ 
&+{{\xi\sum_{i,j} \big[\max(0,-\widehat{d}_{i,j})\big]^2}}\nonumber\\
\text{s.t.}\quad
& {\bar{I}}_{th}- \sum_{m,n}\tilde{I}_{m,n}^{i,j}{G}(\theta_{m,n}^{i,j})_{\text{inner}}\nonumber\\
&\times~\big(~\widehat{\tilde{p}}_{m,n}+\widehat{\bar{u}}_{m,n}{{p}}_{m,n}^{\text{min}}\big){\geq \widehat{d}_{i,j}},\label{WEIOTSCQ_p12}\\
& {f}_2({\bf{W}}^{l},{\bf{W}})\geq \widehat{\bar{u}}_{m,n}r_{m,n},\label{SMSIOESEIO_p12}\\
& \sum_{n\in\mathcal{N}_m}\widehat{\tilde{p}}_{m,n}\leq P_{m},~\forall m\in\mathcal{M},\label{POSIEOT_p12}\\
&\widehat{(\ref{Trans_7b_2})}-\widehat{(\ref{Trans_7b_4})},~\widehat{(\ref{C4a_version1})}.
\end{align}
\end{subequations}
\vspace{-.15cm}
The overall iterative algorithm to solve (\ref{Prob3_all}) is given in Algorithm 1. 
\setlength{\textfloatsep}{0pt}
\begin{algorithm}[tb]
\caption{SU Data Rate Maximization Given the PU Privacy Constraint}
\label{alg1_AO}
\begin{algorithmic}
\STATE \textbf{Input}: $r_{m,n}$, $\tilde{\bf{h}}_{m,n}^t$, $\mathcal{M}$, $\mathcal{N}_m$,$\sigma_{m,n}^2$, $P_m$, $\eta$, $\xi$, convergence tolerance $\epsilon$, maximum iteration number $L_{max}$, privacy zone $\mathcal{C}_r^{\text{PZ}}$.
\STATE \textbf{Output}: $\tilde{\bf{P}},{\bf{P}},\bar{\bf{U}}$, ${\bf{W}}$.
\STATE \textbf{step-1}: Initialize ${\bf{W}}^{(0)}$, $[{\bar{u}}_{m,n}^l]^{(0)}$, $[{\bf{W}}^l]^{(0)}$, counter $l=1$.
\STATE \textbf{step-2}: $\textbf{Repeat:}$
\STATE \qquad \textbf{step-3} $\textbf{Repeat:}$\ \
\STATE \qquad \qquad For given ${f_2}([{\bf{W}}^l]^{(l-1)},{\bf{W}}^{(l-1)})$, and ${G}(\theta_{m,n}^{i,j})^{(l-1)}$, solve problem (P1.1).\\
\STATE \qquad \qquad Update $\bar{u}_{m,n}^{(l)}=\bar{u}_{m,n}$, $f_3([{\bar{u}}_{m,n}^l]^{(l)},{\bar{u}}_{m,n}^{(l)})$, $[{\bar{u}}_{m,n}^l]^{(l)}={\bar{u}}_{m,n}$. \\
\STATE \qquad $\textbf{Until:}$ the objective value in (P1.1) reaches $\epsilon$.\\
\STATE \qquad \textbf{step-4} Update $\tilde{p}_{m,n}^{(l)}=\tilde{p}_{m,n}$, $p_{m,n}^{(l)}=p_{m,n}$, $\bar{u}_{m,n}^{(l)}=\bar{u}_{m,n}$, ${d}_{i,j}^{(l)}={d}_{i,j}$.\\
\STATE \qquad \textbf{step-5} $\textbf{Repeat:}$\ \
\STATE \qquad \qquad For given $\tilde{p}_{m,n}^{(l)}$, $p_{m,n}^{(l)}$, $\bar{u}_{m,n}^{(l)}$, ${d}_{i,j}^{(l)}$, and $f_3([{\bar{u}}_{m,n}^l]^{(l)},{\bar{u}}_{m,n}^{(l)})$, solve problem (P1.2).\\
\STATE \qquad \qquad Update ${\bf{W}}^{(l)}={\bf{W}}$, ${f_2}([{\bf{W}}^l]^{(l)},{\bf{W}}^{(l)})$, ${G}(\theta_{m,n}^{i,j})^{(l)}$, $[{\bf{W}}^l]^{(l)}={\bf{W}}$. \\
\STATE \qquad $\textbf{Until:}$ the objective value in (P1.2) reaches convergence tolerance $\epsilon$.\\
\STATE \qquad \textbf{step-6} Update $l=l+1$.\\
\STATE  $\textbf{Until:}$ the objective value reaches convergence.
\end{algorithmic}
\end{algorithm}

\vspace{-.2cm}
\subsection{Convergence Analysis}

To analyze the overall convergence of Algorithm 1, we need to analyze the convergence of (P1.1) and (P1.2) individually, and then give a comprehensive analysis for the overall AO based algorithm. Let's use $F^{\text{~P1.1}}(\cdot)$ and $F^{\text{~P1.2}}(\cdot)$ to denote the objective functions of (P1.1) and (P1.2). $F_O^{\text{~P1.1}}(\cdot)$ and $F_O^{\text{~P1.2}}(\cdot)$ represent the original objective functions of (P1.1) and (P1.2). We have to point out $F_O^{\text{~P1.1}}(\cdot)=F_O^{\text{~P1.2}}(\cdot)$ for the same input.

{\begin{lem}
For the iteration of SCA algorithm in step 3, each iteration for $F_O^{\text{ P1.1}}({\bf{\Theta}},\bar{u}_{m,n}^l)$ of (P1.1) has the following property:
\vspace{-.4cm}
\begin{eqnarray}
F_O^{\text{ P1.1}}({\bf{\Theta}},\bar{u}_{m,n}^l)=F^{\text{ P1.1}}({\bf{\Theta}},\bar{u}_{m,n}^l,\bar{u}_{m,n}^l)\nonumber\\
\leq F^{\text{ P1.1}}({\bf{\Theta}},\bar{u}_{m,n}^l,\bar{u}_{m,n}^{l+1})\leq F_O^{\text{ P1.1}}({\bf{\Theta}},\bar{u}_{m,n}^{l+1}).
\label{POSEIOTSJM}
\end{eqnarray}
\end{lem}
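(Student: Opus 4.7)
The plan is to verify each link in the chain separately using the two canonical properties of the SCA surrogate $f_3$ introduced in (\ref{RTSYSY_version1}): tightness at the expansion point, and a global one-sided bound elsewhere. After ${\bf W}$ is frozen in Step 3, the only term in the (P1.1) objective in which the expansion point $\bar{u}_{m,n}^l$ appears separately from the optimization variable $\bar{u}_{m,n}$ is the penalty $\eta\sum_{m,n} f_3(\bar{u}_{m,n}^l,\bar{u}_{m,n})$; every other summand depends on $\bar{u}_{m,n}$ alone, and the remaining quantities are folded into ${\bf\Theta}$. Consequently the gap between the surrogate and the original objective at a common point $\bar{u}$ equals $\eta\sum_{m,n}\bigl(f_3(\bar{u}_{m,n}^l,\bar{u}_{m,n})-f(\bar{u}_{m,n})\bigr)$, which reduces the whole lemma to properties of $f_3$.

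First I would establish the left equality: at $\bar{u}=\bar{u}^l$, the first-order correction in (\ref{RTSYSY_version1}) vanishes and $f_3(\bar{u}_{m,n}^l,\bar{u}_{m,n}^l)=f(\bar{u}_{m,n}^l)$, so the displayed gap is zero and $F^{\text{ P1.1}}({\bf\Theta},\bar{u}_{m,n}^l,\bar{u}_{m,n}^l)=F_O^{\text{ P1.1}}({\bf\Theta},\bar{u}_{m,n}^l)$.

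For the middle inequality, I would invoke the optimality of $\bar{u}_{m,n}^{l+1}$ in the inner SCA update. Step 3 of Algorithm~\ref{alg1_AO} produces $\bar{u}_{m,n}^{l+1}$ as a maximizer of $F^{\text{ P1.1}}({\bf\Theta},\bar{u}_{m,n}^l,\cdot)$ over the feasible set of (P1.1), and $\bar{u}_{m,n}^l$ itself is feasible, so the surrogate value cannot decrease, giving $F^{\text{ P1.1}}({\bf\Theta},\bar{u}_{m,n}^l,\bar{u}_{m,n}^l)\leq F^{\text{ P1.1}}({\bf\Theta},\bar{u}_{m,n}^l,\bar{u}_{m,n}^{l+1})$ with essentially no further work.

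The right inequality is where the real effort lies: it requires the global minorant property $F^{\text{ P1.1}}({\bf\Theta},\bar{u}_{m,n}^l,\bar{u})\leq F_O^{\text{ P1.1}}({\bf\Theta},\bar{u})$ for every feasible $\bar{u}$. By the gap identity, this is equivalent to $f_3(\bar{u}_{m,n}^l,\bar{u}_{m,n})\geq f(\bar{u}_{m,n})$, the inequality flipping because $\eta\ll -1$ is negative. Since $f(u)=u-u^{2}$ is strictly concave, its first-order Taylor expansion about $\bar{u}_{m,n}^l$ is a global upper bound on $f$, which is exactly the direction needed; evaluating at $\bar{u}=\bar{u}_{m,n}^{l+1}$ yields the final link in the chain. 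The main obstacle I anticipate is aligning the precise expression in (\ref{RTSYSY_version1}) with the standard first-order expansion so the upper-bound direction is justified unambiguously, and then confirming that the sign of the penalty factor $\eta$ supplies the required flip; once that is settled, the rest is a routine one-line verification, and the three displayed relations concatenate into the stated chain.
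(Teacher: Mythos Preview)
Your proposal is correct and follows the same standard SCA monotone-improvement argument the paper uses: tightness of $f_3$ at the expansion point gives the equality, optimality of $\bar u^{l+1}$ for the surrogate gives the middle inequality, and the concavity of $f(u)=u-u^2$ (so that its first-order expansion majorizes it, with the negative penalty $\eta$ flipping the sign) gives the final inequality. The paper's own proof is terser and does not spell out each link separately, but the underlying mechanism is identical; your only real caveat---reconciling the displayed form of (\ref{RTSYSY_version1}) with the intended first-order Taylor expansion of $f$ at $\bar u^l$---is a notational issue in the paper rather than a gap in your argument.
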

\begin{proof}
Define ${\bf{\Theta}}=\{\tilde{p}_{m,n},p_{m,n},d_{i,j}\}$, where $m\in\mathcal{M}, ~n\in\mathcal{N}_m$, and $(i,j)\in\mathcal{C}^{\text{PZ}}$. For the feasible point $\bar{u}_{m,n}^{l+1}$ and $\bar{u}_{m,n}^{l}$, the searching result $F^{\text{ P1.1}}({\bf{\Theta}},\bar{u}_{m,n}^l,\bar{u}_{m,n}^{l+1})$ is better than $F^{\text{ P1.1}}({\bf{\Theta}},\bar{u}_{m,n}^l,\bar{u}_{m,n}^l)$ for (P1.1), whenever $\bar{u}_{m,n}^{l+1}\neq\bar{u}_{m,n}^{l}$.  On the other hand, if $\bar{u}_{m,n}^{l+1}=\bar{u}_{m,n}^{l}$, i.e. $\bar{u}_{m,n}^{l}$ is the optimal solution of (P1.1), then it must satisfy the first order necessary optimality condition. The value of $F_O^{\text{ P1.1}}({\bf{\Theta}},\bar{u}_{m,n}^l)$ can be improved iteratively. We thus can prove that the sequence $\bar{u}_{m,n}^{l+1}$ converges to a point
satisfying the first order necessary optimality condition. In a similar way, we can prove the convergence of (P1.2) based on SCA algorithm.
\end{proof}}

{
\begin{lem}
Let's partition the variables in (P1.1) and (P1.2) into two blocks, i.e., $\Gamma=\{\bf{W}\}$, $\Theta=\{\tilde{p}_{m,n},p_{m,n},
\bar{u}_{m,n},d_{i,j}\}$, where $n\in\mathcal{N}_m,~m\in\mathcal{M}$ and $(i,j)\in\mathcal{C}^{\text{PZ}}$. We can prove the SU network throughput will increase after each iteration:
\vspace{-.3cm}
\begin{eqnarray}
F_o^{\text{ P1.1}}(\Theta^{r+1},\Gamma^{~r})\overset{\mathrm{a}}{=}F_o^{\text{ P1.2}}(\Theta^{r+1},\Gamma^{~r})\nonumber\\
\overset{\mathrm{b}}
{\leq}F_o^{\text{ P1.2}}(\Theta^{r+1},\Gamma^{~r+1}).
\label{SNMEIS_2}
\end{eqnarray}
\end{lem}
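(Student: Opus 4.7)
My plan is to establish the two relations separately and then combine them with the previous SCA-convergence result. For equality (a), I would invoke the explicit construction of (P1.1) and (P1.2): both are obtained from the same master problem (\ref{Prob3_all}) by fixing one of the two variable blocks and relaxing/approximating the coupling terms, but the original objective $F_O^{\text{ P1.1}}(\cdot)$ coincides with $F_O^{\text{ P1.2}}(\cdot)$ whenever it is evaluated at the \emph{same} tuple $(\Theta,\Gamma)$, as the authors already note before stating the lemma. Hence at the point $(\Theta^{r+1},\Gamma^{~r})$ equality (a) is immediate by definition.

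For inequality (b), the key observation is that $\Gamma^{~r}$ is a feasible point of (P1.2) once $\Theta^{r+1}$ has been fixed: indeed, $(\Theta^{r+1},\Gamma^{~r})$ is exactly the iterate produced at the end of the previous inner loop in Algorithm 1, so all constraints (\ref{WEIOTSCQ_p12})--(\ref{POSIEOT_p12}) together with the Big-M and box constraints are satisfied at $\Gamma^{~r}$. Therefore $\Gamma^{~r}$ can serve as the initialization $[\mathbf{W}^l]^{(0)}$ of the SCA loop in step 5 of Algorithm 1. Applying the previous lemma (the SCA monotonicity argument used for (P1.1)) now to (P1.2), the iterates produced by the inner SCA loop satisfy $F_O^{\text{ P1.2}}(\Theta^{r+1},\Gamma^{~r})\le F_O^{\text{ P1.2}}(\Theta^{r+1},\Gamma^{~r+1})$, which yields (b).

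Chaining (a) and (b) gives the claimed monotone improvement in one AO sweep; in particular $F_O^{\text{ P1.1}}(\Theta^{r+1},\Gamma^{~r})\le F_O^{\text{ P1.2}}(\Theta^{r+1},\Gamma^{~r+1})=F_O^{\text{ P1.1}}(\Theta^{r+1},\Gamma^{~r+1})$, so the sequence of objective values generated by Algorithm 1 is non-decreasing across outer iterations. To finish the argument I would note that the objective of (\ref{Prob3_all}) is bounded from above because the per-ST power budget (\ref{Prob3_con3_1}), the channel magnitudes $\tilde{\mathbf{h}}_{m,n}^{t}$, and the antenna gain expression together force each $\log_2(\cdot)$ summand to be bounded, and the number of SU links is finite. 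Monotone non-decreasing plus bounded above implies convergence of the objective sequence.

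The main obstacle I anticipate is the feasibility argument used for step (b): one must verify that fixing $\Theta^{r+1}$ and letting $\mathbf{W}=\Gamma^{~r}$ does not violate the approximated interference constraint (\ref{WEIOTSCQ_p12}) or the rate constraint (\ref{SMSIOESEIO_p12}), because those constraints depend on inner-approximation anchor points $\mathbf{W}^{l}$ that were chosen during the previous outer iteration. The cleanest way to handle this is to update the anchor $[\mathbf{W}^l]^{(l)}=\mathbf{W}^{(l)}$ before entering (P1.2), as prescribed in Algorithm 1, and to exploit the tightness property of the surrogate bounds in Lemmas \ref{FS11_GSHTY} and \ref{STPOSEICE} at the anchor point, which ensures that the current iterate $(\Theta^{r+1},\Gamma^{~r})$ lies in the feasible set of (P1.2) with equality in the surrogate inequalities.
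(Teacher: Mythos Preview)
Your proposal is correct and follows essentially the same approach as the paper: equality (a) is obtained from $F_O^{\text{P1.1}}(\cdot)=F_O^{\text{P1.2}}(\cdot)$ for identical inputs, and inequality (b) from the tightness of the first-order surrogate at the anchor $\Gamma^{r}$ together with the SCA monotonicity already proved for the inner loop. Your treatment is in fact more thorough than the paper's, which dispatches (b) in a single sentence (``the first-order Taylor expansions \ldots\ are tight at the given feasible points'') without spelling out the feasibility of $\Gamma^{r}$ in (P1.2) or the boundedness needed for convergence; the obstacle you raise about the anchor-dependent constraints (\ref{WEIOTSCQ_p12})--(\ref{SMSIOESEIO_p12}) and its resolution via the tightness of Lemmas~\ref{FS11_GSHTY} and~\ref{STPOSEICE} is exactly the right way to close that gap.
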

\begin{proof}
After step 3, since the optimal solution of (P1.1) is obtained for given $\Theta^r$ and $\Gamma^r$, we have $F_o^{\text{ P1.1}}(\Theta^{r},\Gamma^r)\leq F_o^{\text{ P1.1}}(\Theta^{r+1},\Gamma^r)$.
(a) in (\ref{SNMEIS_2}) holds since  $F_O^{\text{~P1.1}}(\cdot)=F_O^{\text{~P1.2}}(\cdot)$ for the same input. (b) in (\ref{SNMEIS_2}) holds since the first-order Taylor expansions in (P1.1) is tight at the given feasible points. Finally, we can observe that problem (P1.2) at $\{\Theta^{r+1},\Gamma^{~r+1}\}$ has the same objective value as
that of problem (P1.1). As a result, SU network throughput will increase by iteratively running the alternating optimization algorithm.
\end{proof}}

\vspace{-.2cm}
\subsection{Computational Complexity Analysis}

By adopting the analysis in \cite{zhu2017beamforming,zhu2016outage}, we present the complexity for Algorithm 1. Let $N^{\text{node}}$, $|\mathcal{M}|$, and $|\mathcal{C}^{\text{PZ}}|$ denote the number of SR nodes, the number of ST nodes, the size of privacy zone. For (P1.1), the major complexity arises due to the linear matrix inequality (LMI) contraints. We use $Q_{\text{max}}^1$ to denote the SCA iteration number. To solve (P1.1), the computational complexity can be given as $\mathcal{O}(n_1Q_{\text{max}}^1(\sqrt{7N^{\text{node}}+ |\mathcal{M}|+|\mathcal{C}^{\text{PZ}}|}((7N^{\text{node}}+ |\mathcal{M}|+|\mathcal{C}^{\text{PZ}}|)(n_1+1)+n_1^{~2})))$, where $n_1=\mathcal{O}(3N^{\text{node}}+|\mathcal{C}^{\text{PZ}}|)$. In (P1.2), the computational complexity is due to the second order cone (SOC) constraint. We use $Q_{\text{max}}^2$ denotes the SCA iteration number for (P1.2). The computational complexity for (P1.2) can be expressed as $\mathcal{O}(n_2Q_{\text{max}}^2(\sqrt{2(N^{\text{node}}+|\mathcal{C}^{\text{PZ}}|)}(N^{\text{node}}+|\mathcal{C}^{\text{PZ}}|+n_2^{~2})))$, where $n_2=\mathcal{O}(N_tN^{\text{node}})$.

\vspace{-.2cm}
\section{SU Network Beamforming for PU Location Privacy Enhancement}
\label{Prob_formulation}
\vspace{-.1cm}

In the previous section, a problem is formulated to maximize the SU network throughput subject to the PU privacy contraints.   
In this section, we aim to enhance the PU location privacy given the SU network throughput requirement.
A two-step scheme will be proposed first, followed by the convergence discussion and complexity analysis.

The benefits of beamforming 
for enhancing the PU location privacy would appear to be straightforward at first glance. In fact, beamforming changes rapidly along with the fast fading wireless channels, whereas the privacy zone holds static for a long time period. Therefore, the existing beamforming techniques cannot be directly applied in our problem.
Here we begin by proposing a scheme to jointly optimize the privacy zone and beamforming design in subsection \ref{CSIEOT}. 
We then identify the key challenges in this joint design problem. 
To efficiently solve this problem, a 2-step scheme is proposed in subsections \ref{TYSYU_tuy8} and \ref{STEP_2}.

\vspace{-.2cm}
\subsection{Joint Design for Privacy Zone and SU Network Beamforming}
\label{CSIEOT}

For given CSI and privacy zone candidate set, the PU location privacy $\log_2(|\mathcal{C}^{\text{PZ}}|)$ can be maximized by jointly optimizing the beamformer ${\bf{W}}$, the power limit ${\bf{P}}$, SU node exclusion ${\bf{U}}$, and binary indicator ${\bf{V}}$, which yields the following problem

\vspace{-.5cm}
\begin{subequations}
\label{P1_SDFSG_prob}
\begin{align}
\mathop {\max }\limits_{{\bf{V}},{\bf{W}},{\bf{P}},{\bf{U}}}\quad & \log_2(\sum_{i,j}v_{i,j})
\label{UIS1sSETY_KIO_prob}\\
\text{s.t.}\quad 
&v_{i,j}\sum_{m,n}\tilde{I}_{m,n}^{i,j}G(\theta_{m,n}^{i,j})p_{m,n} u_{m,n}~\leq~ \tilde{I}_{th},\label{communication_utility_11}\\
&v_{i,j}\in\{0,1\} ,~{c_{i,j}\in\mathcal{C }_c},~u_{m,n}\in\{0,1\},
\label{LOSETDX_SET_prob}\\
&{\sum_m\sum_n u_{m,n} \geq \sum_m|\mathcal{N}_{m}|-N_{re}},\label{OPISO_SAET}\\
&(\ref{Prob3_con2_2})-(\ref{Prob3_con5}),~m\in\mathcal{M},~n\in\mathcal{N}_m.
\end{align}
\end{subequations}
\vspace{-.1cm}
In problem (\ref{P1_SDFSG_prob}), the PU location privacy is optimized by designing the privacy zone $\mathcal{C}^{\text{PZ}}$, which contains the cells selected from the candidate set $\mathcal{C }_c$. 
For each cell $c_{i,j}\in\mathcal{C }_c$, a binary variable $v_{i,j}\in\{0,1\}$ is introduced to denote whether this cell is included in the privacy zone, such that  $v_{i,j}=1$ when cell $c_{i,j}$ is included; otherwise, not.
Therefore, the size of privacy zone is given as $|\mathcal{C}^{\text{PZ}}|=\sum_{i,j}v_{i,j}$ in (\ref{UIS1sSETY_KIO_prob}).

Problem (\ref{P1_SDFSG_prob}) shares several constrains as those in Problem (\ref{Prob3_all}), i.e., the constraints (\ref{Prob3_con2_2})-(\ref{Prob3_con5}).
In (\ref{communication_utility_11}), a binary indicator $v_{i,j}$ is 
multiplied by the terms on the left side to make the inequality always hold. Note constraint (\ref{OPISO_SAET}) is introduced to set a limit on the number of excluded SU nodes, which avoids excluding all of SU nodes on the band while expanding the privacy zone. 

Unfortunately, there arises an issue while solving problem (\ref{P1_SDFSG_prob}): including the rapidly-changing CSI $\tilde{\bf{h}}_{m,n}^t$ in the formulation, which is necessary for the beamforming, leads to a fast-changing privacy zone. While utilizing the beamforming for the privacy zone design, each beamformer requires a very short updating period in response to the fast fading CSI $\tilde{\bf{h}}_{m,n}^t$, in the scale of microseconds to milliseconds. 
In consequence, involving the rapidly-changing CSI $\tilde{\bf{h}}_{m,n}^t$ in (\ref{P1_SDFSG_prob}) to search a feasible privacy zone will force the privacy zone to be frequently changed to keep pace with the CSI $\tilde{\bf{h}}_{m,n}^t$. However, the privacy zones developed for the PUs usually have a long updating cycle, i.e., privacy zone has to be activated and then held still for a period of time while PUs are in operation, i.e., tens of minutes or even a few hours. 
Based on the above observations, the CSI $\tilde{\bf{h}}_{m,n}^t$ cannot be directly included in the problem formulation.
That is, the impact of fast-varying beamforming on the privacy zone design must be evaluated in a long-term manner.

To address the aforementioned issue, we decouple problem (\ref{P1_SDFSG_prob}) into two steps: first, the long-term privacy zone design is performed in step 1. The short-term beamforming information including the CSI $\tilde{\bf{h}}_{m,n}^t$ is discarded in this step and we instead utilize an upper bound for the beamformer gain to evaluate its long-term impact. Second, the short-term beamforming design is conducted later in step 2 based on the privacy zone design in step 1.

\vspace{-.3cm}
\subsection{Step 1: Improving Location Privacy Based on Network Deployment Information}
\label{TYSYU_tuy8}

After dropping the rapidly-varying CSI $\tilde{\bf{h}}_{m,n}^t$, the PU location privacy can be optimized by solving the following problem.
\vspace{-.2cm}
\begin{subequations}
\label{P1_SDFSG}
\begin{align}
\mathop {\max }\limits_{{\bf{V}},{\bf{P}},{\bf{U}}}\quad & \log_2(\sum_i\sum_j v_{i,j})
\label{UIS1sSETY_KIO}\\
\text{s.t.}\quad 
&v_{i,j}\sum_{m,n}\tilde{I}_{m,n}^{i,j}G(\theta_{m,n}^{i,j})p_{m,n} u_{m,n}~\leq~ \tilde{I}_{th},
\label{Prob1_cons1}\\
&{\sum_{m,n} u_{m,n}{p}_{m,n}}{\leq P_{m,n}},{u_{m,n}\in\{0,1\},~v_{i,j}\in\{0,1\} },
\label{MKsCGST_SAET}\\
&{\sum_m\sum_n u_{m,n} \geq \sum_m|\mathcal{N}_{m}|-N_{re}},
\label{OPISO_SAET}
\end{align}
\end{subequations}

Even though the constraints related to the registered network deployment information are only needed in (\ref{P1_SDFSG}), such as the maximum number of SRs $N_{m}^{~\text{max}}$, and the antenna attributes of each multi-antenna STs, (\ref{Prob1_cons1}) still contains the antenna gain, which is parameterized by the fast-varying beamforming information. 
To evaluate the impact of beamforming on the aggregate interference in (\ref{Prob1_cons1}) from a long-term perspective, we propose an upper bound for the antenna gain, which has a deterministic expression.

\subsubsection{Transformation for the Constraint (\ref{Prob1_cons1})}

The deterministic upper bound for the antenna gain is given as follows.
\vspace{-.5cm}
\begin{lem}
For the uniform linear array (ULA) with isotropic elements, the normalized antenna gain has the upper bound as:
\vspace{-.45cm}
\begin{eqnarray}
G(\theta_{m,n}^{i,j})\leq G({\bf{\lambda}})\overset{\Delta}{=} \epsilon_a\frac{(\tilde{\lambda}_{m,n}^1)}{(\lambda_{2})_{A}},
\label{ERTS_erwet}
\end{eqnarray}
where $\tilde{\lambda}_{m,n}^1$ is the maximum eigen-value for ${\bf{B}}={\bf{v}}(\theta_k){\bf{v}}(\theta_k)^H$ and $(\lambda_{~\text{min}})_{A}$ is the minimum eigen-value for the matrix ${\bf{A}}$.
\label{SSERT_34}
\end{lem}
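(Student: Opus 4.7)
The plan is to bound the numerator and denominator of the antenna-gain expression separately using the Rayleigh--Ritz inequality, so that the beamforming vector $\mathbf{w}_{m,n}$ cancels out of the ratio and only deterministic, array-geometry-dependent quantities remain. The key insight is that both $\max_k U(\theta_k)$ and the averaged radiation intensity in the denominator of $G(\theta_{m,n}^{i,j})$ are quadratic forms in the same beamformer $\mathbf{w}_{m,n}$, so an upper bound on the quotient can be obtained by bounding the numerator from above and the denominator from below.

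First, I would rewrite $G(\theta_{m,n}^{i,j}) = \epsilon_a \max_k U(\theta_k)/\bar U$, with $U(\theta_k) = \mathbf{w}_{m,n}^H \mathbf{B}_k \mathbf{w}_{m,n}$ and $\mathbf{B}_k = \mathbf{v}(\theta_k)\mathbf{v}(\theta_k)^H$. Pushing the integral through the quadratic form turns the denominator into $\bar U = \mathbf{w}_{m,n}^H \mathbf{A}\mathbf{w}_{m,n}$, where $\mathbf{A} = \tfrac{1}{4\pi}\int_0^{2\pi}\!\!\int_0^\pi \mathbf{v}(\theta)\mathbf{v}(\theta)^H \sin\theta\, d\theta\, d\phi$ --- precisely the PSD matrix identified in Lemma~1.

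Second, I would invoke the two-sided Rayleigh--Ritz bounds: for any Hermitian PSD matrix $\mathbf{M}$ and any $\mathbf{w}$, one has $\lambda_{\min}(\mathbf{M})\|\mathbf{w}\|^2 \leq \mathbf{w}^H \mathbf{M}\mathbf{w} \leq \lambda_{\max}(\mathbf{M})\|\mathbf{w}\|^2$. Applied to the numerator this gives $\max_k \mathbf{w}_{m,n}^H \mathbf{B}_k \mathbf{w}_{m,n} \leq \bigl(\max_k \lambda_{\max}(\mathbf{B}_k)\bigr)\|\mathbf{w}_{m,n}\|^2 = \tilde{\lambda}_{m,n}^1 \,\|\mathbf{w}_{m,n}\|^2$ (and since each $\mathbf{B}_k$ is rank-one, its top eigenvalue coincides with $\|\mathbf{v}(\theta_k)\|^2$). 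Applied to the denominator it gives $\mathbf{w}_{m,n}^H \mathbf{A}\mathbf{w}_{m,n} \geq \lambda_{\min}(\mathbf{A})\,\|\mathbf{w}_{m,n}\|^2$. Taking the ratio cancels $\|\mathbf{w}_{m,n}\|^2$ and yields $G(\theta_{m,n}^{i,j}) \leq \epsilon_a\, \tilde{\lambda}_{m,n}^1/\lambda_{\min}(\mathbf{A})$, which is the claimed $G({\bf{\lambda}})$. Because neither the instantaneous CSI nor the beamformer appears on the right-hand side, this bound is exactly the deterministic long-term surrogate needed to carry out the privacy-zone design in Step~1.

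The main obstacle I anticipate is justifying that $\lambda_{\min}(\mathbf{A}) > 0$ so the bound is finite and meaningful. I would argue $\mathbf{A}\succ 0$ by showing that the family of ULA steering vectors $\{\mathbf{v}(\theta)\}$ spans $\mathbb{C}^k$ over the integration domain; otherwise some nonzero $\mathbf{w}$ would satisfy $\mathbf{v}(\theta)^H \mathbf{w}\equiv 0$ almost everywhere, which is impossible because $\mathbf{v}(\theta)^H \mathbf{w}$ is an analytic (trigonometric polynomial) function of $\theta$ and thus cannot vanish on a set of positive measure unless it vanishes identically. A secondary concern is tightness: since the two Rayleigh--Ritz bounds are not simultaneously achieved by the same $\mathbf{w}_{m,n}$, the upper bound is conservative; this only makes the resulting privacy zone smaller than necessary but does not affect correctness of the aggregate-interference constraint.
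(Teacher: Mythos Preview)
Your proposal is correct and follows essentially the same route as the paper: both bound the numerator $\max_k U(\theta_k)=\mathbf{w}_{m,n}^H\mathbf{B}\mathbf{w}_{m,n}$ above by $\tilde{\lambda}_{m,n}^1\|\mathbf{w}_{m,n}\|^2$ and the denominator $\mathbf{w}_{m,n}^H\mathbf{A}\mathbf{w}_{m,n}$ below by $(\lambda_{\min})_A\|\mathbf{w}_{m,n}\|^2$ via the Rayleigh--Ritz (eigenvalue) inequalities, then cancel $\|\mathbf{w}_{m,n}\|^2$. Your additional remarks on strict positive-definiteness of $\mathbf{A}$ and on the conservativeness of the bound go beyond what the paper's Appendix~II supplies, but they only strengthen the argument.
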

\vspace{-.2cm}
\begin{proof}
For the ease of presentation, we have moved the proof for Lemma \ref{SSERT_34} to Appendix II.
\end{proof}
\vspace{-.3cm}
Based on the upper bound in Lemma \ref{SSERT_34}, we thus can formulate the upper bound for the aggregate interference, which can be expressed as:
\vspace{-.2cm}
\begin{eqnarray}
~v_{i,j}\sum_{m,n}\tilde{I}_{m,n}^{i,j}G({\bf{\lambda}})p_{m,n}u_{m,n}~\leq~ \tilde{I}_{th}.
\label{NMJIOP_version1}
\end{eqnarray}
Relying on (\ref{NMJIOP_version1}), we can bound the influence of beamforming so that it can be evaluated in a long-term manner. 
The upper bound is raltively too loose for the SU with beamforming capability. A revised upper bound $G(\theta_{m,n}^{\text{ low}})=r_{\text{ratio}}G({\bf{\lambda}})~,r_{\text{ratio}}\in(0,1)$ is considered here. Note that the left-hand side of (\ref{NMJIOP_version1}) is non-convex due to the multi-linear production term $p_{m,n}u_{m,n}$ and the binary variable $v_{i,j}$. In general, there is no efficient method to obtain the optimal solution. In what follows, to address the non-convexity, the multiplication term $p_{m,n}u_{m,n}$ is transformed via introducing the auxiliary $\tilde{p}_{m,n}=p_{m,n}u_{m,n}$ by using the Big-M formulation \cite{wei2017optimal,ng2016power} techniques developed in (\ref{Trans_7b_2})-(\ref{Trans_7b_4}). For the binary variable $v_{i,j}$ and $u_{m,n}$, the similar relaxation technique introduced in (\ref{C4a_version1})-(\ref{C4b_version1}) can be adopted. Please note the binary variable $v_{i,j}$ requires a constraint $f_3({\bar{v}}_{i,j}^l,{\bar{v}}_{i,j})\leq 0$, where $f_3({\bar{v}}_{i,j}^l,{\bar{v}}_{i,j})=f({\bar{v}}_{i,j})+\{\sum_{i,j}(1-2{\bar{v}}_{i,j})\}({\bar{v}}_{i,j}-{\bar{v}}_{i,j}^{~l})$ and $f({\bar{v}}_{i,j})=\sum_{i,j}{\bar{v}}_{i,j}-\sum_{i,j}{\bar{v}}_{i,j}^2$. Finally, the modified problem can be expressed as

\vspace{-.5cm}
\begin{subequations}
\label{P253_SDFSG}
\begin{align}
\mathop {\max }\limits_{\tilde{\bf{P}},{\bf{P}},\bar{\bf{U}},\bar{\bf{V}}}\quad & \log_2(\sum_{i,j} \bar{v}_{i,j})+\eta\sum_{i,j}f_3({\bar{v}}_{i,j}^l,{\bar{v}}_{i,j})\nonumber\\
&+\eta\sum_{m,n}f_3({\bar{u}}_{m,n}^l,{\bar{u}}_{m,n})
\label{sSSE_45}\\
\text{s.t.}\quad 
&\bar{v}_{i,j} \sum_{m,n}\tilde{I}_{m,n}^{i,j}G(\theta_{m,n}^{\text{ low}})\big({\tilde{p}}_{m,n}+\bar{{u}}_{m,n}{{p}}_{m,n}^{\text{min}}\big)~{\leq~\bar{I}_{th}},
\label{C1}\\
& 0\leq \bar{v}_{i,j} \leq 1,~ 0\leq \bar{u}_{m,n} \leq 1,\label{C1_1}\\
& (\ref{Trans_7b_2})-(\ref{Trans_7b_4}),~ \sum_{m,n}\tilde{p}\leq P_m,\label{C1_2}\\
& {\sum_m\sum_n \bar{u}_{m,n} \geq \sum_m|\mathcal{N}_{m}|-N_{re}}.
\label{C2}
\end{align}
\end{subequations}

We augment $f_3({\bar{u}}_{m,n},{\bar{u}}_{m,n}^l)$ into the objective function via a negative penalty factor $\eta << -1$. Unfortunately, the problem above is still intractable due to the non-convex multiplication term in (\ref{C1}). AO algorithm can be applied here to iteratively optimizing $\bar{v}_{m,n}$ or rest variables $\{\tilde{p}_{m,n},{p}_{m,n}, \bar{u}_{m,n}\}$ with the other fixed. The output of (\ref{P253_SDFSG}) are the power allocation $p_{m,n}\in \mathcal{P}^{\text{step 1}}$, the SU node index $\bar{u}_{m,n}$ with $\bar{u}_{m,n}=1$, collected in $\mathcal{U}^{\text{step 1}}$, and the privacy zone design $v_{i,j}\in\mathcal{V}^{\text{step 1}}$. 
\vspace{-.2cm}
\subsection{Computational Complexity Analysis}

Let $N^{\text{node}}$ and $|\mathcal{C}_C|$ represent the number of SR nodes and the size of the candidate set. First, for given $v_{i,j}^r$ in the $r$-th iteration, solving (\ref{P253_SDFSG}) requires a computational complexity as $\mathcal{O}(nQ_{\text{max}}\sqrt{8N^{\text{node}}}(8N^{\text{node}}(n+1)+n^2)$, where $n=\mathcal{O}({3N^{\text{node}}})$ and $Q_{\text{max}}$ is the SCA iteration number. Second, given $\{\tilde{p}_{m,n}^r,{p}_{m,n}^r, \bar{u}_{m,n}^r\}$ in the $r$-th iteration, to solve (\ref{P253_SDFSG}), the computational complexity is given as $\mathcal{O}(nQ_{\text{max}}\sqrt{3|\mathcal{C}_C|}(3|\mathcal{C}_C|(n+1)+n^2)$, where $n=\mathcal{O}({3N^{\text{node}}})$ and $Q_{\text{max}}$ is the SCA iteration number.

\vspace{-.2cm}
\subsection{Step 2: Further Improving SU Data Rate Via Limited Antenna Gain}
\label{STEP_2}

For any given privacy zone, power allocation, and SU node selection, i.e., ${\mathcal{P}^{\text{step 1}},\mathcal{U}^{\text{step 1}},\mathcal{V}^{\text{step 1}}}$, the SU data rate can be improved by further optimizing the beamforming vector ${\bf{w}}_{m,n}$ and the problem can be formulated as
\vspace{-.1cm}
\begin{subequations}
\label{Step_2}
\begin{align}
\mathop {\max }\limits_{{\bf{U}},{\bf{W}}}\quad & \sum_m\sum_n u_{m,n}\log_2\Big(1+\frac{|[\tilde{\bf{h}}_{m,n}^t]^H{\bf{w}}_{m,n}|^2}{I_{m,n}^{\text{inter}}+I_{m,n}^{\text{intra}}+\sigma_{m,n}^2}\Big)
\label{STEP2_obj}\\
\text{s.t.}\quad 
&\log_2\Big(1+\frac{|[\tilde{\bf{h}}_{m,n}^t]^H{\bf{w}}_{m,n}|^2}{I_{m,n}^{\text{inter}}+I_{m,n}^{\text{intra}}+\sigma_{m,n}^2}\Big)\geq u_{m,n}r_{m,n},\label{STEP2_con1}\\
&\sum_{m,n}\tilde{I}_{m,n}^{i,j}G(\theta_{m,n}^{i,j})p_{m,n}u_{m,n}\nonumber\\
&\leq \sum_{m,n}\tilde{I}_{m,n}^{i,j}G(\theta_{m,n}^{\text{ low}})p_{m,n}u_{m,n} ,\label{STEP2_con2}\\
&{||{\bf{w}}_{m,n}||^2}{\leq p_{m,n}},
\label{STEP2_con3}\\
&u_{m,n}\in\{0,1\},~v_{i,j}\in \mathcal{V}^{\text{step 1}}, u_{m,n}\in \mathcal{U}^{\text{step 1}}.
\label{STEP2_con4}
\end{align}
\end{subequations}
In (\ref{Step_2}), we consider the constraints on the minimum QoS requirement in (\ref{STEP2_con1}), the constraints on the aggregated interference over each cell in (\ref{STEP2_con2}) and the constraints on the transmission power for the beamformer in (\ref{STEP2_con3}). In (\ref{STEP2_con2}), the short-term antenna gain has been bounded according to Lemma 2 in step 1. 

{In existing works \cite{nasir2016secrecy,nasir2017secure}, beamforming is usually addressed by iterative algorithms such as successive convex approximation (SCA) or alternating direction method of multipliers (ADMM). In this paper, we have to point out that (\ref{Step_2}) has non-convex constraint (\ref{STEP2_con2}). In general, there is no standard method for solving such nonconvex optimization problems efficiently \cite{nasir2016secrecy,nasir2017secure}. In the following, the similar transformation tricks developed to solve (5) are adopted.}

\subsubsection{Transformations for Constraint (\ref{STEP2_con2})}
The constraint (\ref{NMJIOP_version1}) on the aggregate interference contains the upper bound for the antenna gain. Because of the fact that (\ref{NMJIOP_version1}) is formulated involving the SU nodes in $\mathcal{U}^{\text{step 1}}$ and $p_{m,n}\in \mathcal{P}^{\text{step 1}}$, this constraint can be equivalently converted to the following form:
\vspace{-.3cm}
\begin{eqnarray}
\sum_{m,n}\tilde{g}_{m,n}^{i,j}G(\theta_{m,n}^{i,j})u_{m,n}&\leq& \sum_{m,n}\tilde{g}_{m,n}^{i,j}G(\theta_{m,n}^{\text{ low}})u_{m,n},
\label{OPWERTZ}\\
||{\bf{w}}_{m,n}||_2^2&\leq& p_{m,n},
\label{YUSIE}
\end{eqnarray}
where $\tilde{g}_{m,n}^{i,j}=\tilde{I}_{m,n}^{i,j}p_{m,n}$. Note $G(\theta_{m,n}^{i,j})$ and the production term $G(\theta_{m,n}^{i,j})u_{m,n}$ are non-convex. In subsequent lemma, we adopt the successive convex optimization technique to get an inner approximation for the antenna gain first. The term $G(\theta_{m,n}^{i,j})u_{m,n}$ will be addressed later.
\vspace{-.1cm}
\begin{lem}
Based on the analysis in Lemma \ref{SSERT_34}, antenna gain $G(\theta_{m,n}^{i,j})$ can be approximated as
\vspace{-.25cm}
\begin{eqnarray}
G(\theta_{m,n}^{i,j})\leq \frac{\epsilon_a{(\lambda_{1})_{m,n}^{i,j}}}{{{\bf{w}}_{m,n}{\bf{A}}{\bf{w}}_{m,n}}} \nonumber\\
\leq \frac{\epsilon_a{(\lambda_{1})_{m,n}^{i,j}}}{2\mathcal{R}\Big\{ (({\bf{w}}_{m,n}^l)^H\tilde{\bf{v}})^H({\bf{w}}_{m,n}^H\tilde{\bf{v}})\Big\}-|({\bf{w}}_{m,n}^l)^H\tilde{\bf{v}}|^2}.\label{RSYSY}
\end{eqnarray}
\label{GSHTY}
\end{lem}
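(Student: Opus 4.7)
The statement is the exact analogue of Lemma \ref{FS11_GSHTY}, except that the numerator is further tightened using the same maximum-eigenvalue step that was the core of Lemma \ref{SSERT_34}. The plan is therefore to stitch the two ingredients together: apply the rank-one Rayleigh-quotient bound to the numerator of the antenna gain, and then the first-order Taylor lower bound to its denominator, so that the final surrogate has a quadratic-over-linear form suitable for the SCA loop in problem (\ref{Step_2}).

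First, I would rewrite the antenna gain defined in Appendix I in Rayleigh-quotient form, $G(\theta_{m,n}^{i,j}) = \epsilon_a\,({\bf{w}}_{m,n}^H{\bf{B}}{\bf{w}}_{m,n})/({\bf{w}}_{m,n}^H{\bf{A}}{\bf{w}}_{m,n})$, where ${\bf{B}}={\bf{v}}(\theta_k){\bf{v}}(\theta_k)^H$ is the rank-one steering matrix and ${\bf{A}}$ is the PSD matrix obtained by integrating the outer product of the steering vector over the sphere (the same ${\bf{A}}$ used in the proof of Lemma \ref{FS11_GSHTY}). Because $G$ is scale invariant in ${\bf{w}}_{m,n}$, the standard Rayleigh-quotient inequality ${\bf{w}}_{m,n}^H{\bf{B}}{\bf{w}}_{m,n} \leq (\lambda_1)_{m,n}^{i,j}\,\|{\bf{w}}_{m,n}\|^2$ — where $(\lambda_1)_{m,n}^{i,j}$ is the largest eigenvalue of ${\bf{B}}$ — collapses the numerator to $\epsilon_a(\lambda_1)_{m,n}^{i,j}$ and delivers the first inequality of the claim.

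Second, for the denominator I would reuse verbatim the convexity step already carried out in Lemma \ref{FS11_GSHTY}: factoring ${\bf{A}} = \tilde{\bf{v}}\tilde{\bf{v}}^H$ via the eigendecomposition $\tilde{\bf{v}} = {\bf{Q}}{\bf{\Gamma}}^{1/2}$ recalled there, the quadratic ${\bf{w}}_{m,n}^H{\bf{A}}{\bf{w}}_{m,n} = \|{\bf{w}}_{m,n}^H\tilde{\bf{v}}\|^2$ is convex in ${\bf{w}}_{m,n}$, so its first-order Taylor expansion at any feasible iterate ${\bf{w}}_{m,n}^l$ yields the global lower bound of equation (\ref{YUSUI_89SDE}). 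Substituting this lower bound into the denominator of the first inequality — and reversing the inequality direction, since the denominator has shrunk — produces the second inequality of the lemma.

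The main obstacle I anticipate is not algebraic but practical: guaranteeing that the linearized denominator $2\mathcal{R}\{\cdot\} - |\cdot|^2$ stays strictly positive at every iterate ${\bf{w}}_{m,n}^l$ fed into the SCA loop, since only then is the quadratic-over-linear surrogate convex and usable inside problem (\ref{Step_2}). Tightness of the bound at ${\bf{w}}_{m,n} = {\bf{w}}_{m,n}^l$, where the linearization equals $({\bf{w}}_{m,n}^l)^H{\bf{A}}{\bf{w}}_{m,n}^l > 0$ whenever ${\bf{w}}_{m,n}^l$ is not in the null space of ${\bf{A}}$, secures positivity in a neighborhood of each iterate; warm-starting from the Lemma \ref{FS11_GSHTY} iterate, or equivalently a trust-region safeguard on the SCA step, should make this benign in practice.
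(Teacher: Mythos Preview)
Your proposal is correct and follows essentially the same route as the paper: bound the numerator of the antenna gain by the maximum eigenvalue $(\lambda_1)_{m,n}^{i,j}$ of the rank-one steering matrix ${\bf{v}}(\theta_{m,n}^{i,j}){\bf{v}}(\theta_{m,n}^{i,j})^H$ (the step inherited from Lemma~\ref{SSERT_34}), and then lower-bound the convex quadratic denominator ${\bf{w}}_{m,n}^H{\bf{A}}{\bf{w}}_{m,n}$ by its first-order Taylor expansion at the iterate ${\bf{w}}_{m,n}^l$, exactly as in (\ref{YUSUI_89}). Your additional remark on ensuring positivity of the linearized denominator is a useful practical observation that the paper does not make explicit.
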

\vspace{-.5cm}
\begin{proof}
$(\lambda_{1})_{m,n}^{i,j}$ is the maximum eigenvalue for matrix ${\bf{v}}(\theta_{m,n}^{i,j}){\bf{v}}(\theta_{m,n}^{i,j})^H$. Recall that any convex function is globally lower-bounded by its first-order Taylor expansion and the denominator in (\ref{RSYSY}) at any point is derived as follows:
\vspace{-.3cm}
\begin{eqnarray}
{{\bf{w}}_{m,n}^H{\bf{A}}{\bf{w}}_{m,n}}\geq 
2\mathcal{R}\Big\{ (({\bf{w}}_{m,n}^l)^H\tilde{\bf{v}})^H({\bf{w}}_{m,n}^H\tilde{\bf{v}})\Big\}\nonumber\\ -|({\bf{w}}_{m,n}^l)^H\tilde{\bf{v}}|^2,
\label{YUSUI_89}
\end{eqnarray}
where ${\bf{w}}_{m,n}^l$ is one of the feasible points in the feasible region. It should be noted that Lemma \ref{GSHTY} guarantees the single-entry interference is strictly equal to or less than the antenna gain. We use $f_5({\bf{w}}^l_{m,n},{\bf{w}}_{m,n})$ to represent the rightmost term in Lemma \ref{GSHTY}.
\end{proof}
\vspace{-.2cm}

\subsubsection{Transformation for the Constraint (\ref{STEP2_con1})}
Constraints (\ref{STEP2_con1}) has a similar expression with (\ref{Prob3_con2_2}) and the similar transformations in (\ref{Prob3_con2_2}) thus can be adopted.

Based on Lemma \ref{GSHTY} and the transformation tricks in (\ref{Prob3_con2_2}), problem (\ref{Step_2}) can be transformed into:
\vspace{-.1cm}
\begin{subequations}
\label{P2_step2_final}
\begin{align}
\mathop {\max }\limits_{\bar{\bf{U}},{\bf{W}}}\quad & \sum_{m,n}\bar{u}_{m,n}{f}_2({\bf{W}}^{l},{\bf{W}})+\eta\sum_{m,n}f_3({\bar{u}}_{m,n}^l,{\bar{u}}_{m,n})
\label{sssr_45_1}\\
\text{s.t.}\quad 
&{f}_2({\bf{W}}^{l},{\bf{W}})\geq \bar{u}_{m,n}r_{m,n}\label{sssr_45_1_1}\\
&\sum_{m,n}\tilde{g}_{m,n}^{i,j}{f_5({\bf{w}}^l,{\bf{w}})}\bar{u}_{m,n}{\leq \sum_{m,n}\tilde{g}_{m,n}^{i,j}G(\theta_{m,n}^{\text{ low}})}\bar{u}_{m,n}
\label{sSETY_QIO},\\
&{{||{\bf{w}}_{m,n}||^2}{\leq p_{m,n}}}{}
\label{sSESCGD_4LO},\\
&{n\in \mathcal{N}_{m}^{\text{step 1}},~v_{i,j}\in \mathcal{V}^{\text{step 1}}}{}
\label{sSEIOWSME_4d},
\end{align}
\end{subequations}
where objective (\ref{sssr_45_1}) appends the penalty term $\eta\sum_{m,n}f_3({\bar{u}}_{m,n}^l,{\bar{u}}_{m,n})$ to fulfill the binary constraint for $\bar{u}_{m,n}$. To limit the interference, constraint (\ref{sSETY_QIO}) requires the actual antenna gain to be less than the modified bound $G(\theta_{m,n}^{\text{ low}})$. Note that ${f_5({\bf{w}}^l,{\bf{w}})}\bar{u}_{m,n}$ and $\bar{u}_{m,n}{f}_2({\bf{W}}^{l},{\bf{W}})$ are neither convex or concave with respect to ${\bf{w}}_{m,n}$ and $\bar{u}_{m,n}$. {To tackle the non-convexity, AO algorithm and the successive convex optimization techiniques can be applied in each iteration. Specifically, define $\{\bar{u}^r_{m,n}\}$ as the given SU node selection in the $r$-th iteration.
The beamforming $\{{\bf{w}}_{m,n}\}$ can be solved based on the SCA algorithm after fixing $\{\bar{u}^r_{m,n}\}$. Note that for $\bar{u}^r_{m,n}$ or ${\bf{w}}_{m,n}$, only the SU nodes belonging to ${\mathcal{U}^{\text{step 1}}}$ is considered. In a similar manner, $\{\bar{u}_{m,n}\}$ can be searched later after fixing $\{{\bf{w}}_{m,n}\}$. Due to page limitation, we omit the AO based algorithm for (\ref{P253_SDFSG}) and (\ref{P2_step2_final}) here.}

\vspace{-.2cm}
\section{Numerical Evaluation}
\label{POIUIST}
\vspace{-.1cm}

In this section, we provide numerical examples to demonstrate the effectiveness of the proposed algorithms. The CVX solver is applied to solve the disciplined convex programming problems. Unless otherwise specified, the same system settings and SU network deployment are used in the simulations. 
As defined in \cite{bahrak2014protecting}, we define the database coverage area regulated by the SAS, as a 300km by 300km square. In such a database coverage area, the detailed definition of EZ, SUs and PUs is given below: (a) we consider an EZ in this region, which is established to reduce the interference to PUs. For a typical setting of EZ, we set the radius of the EZ to be equal to 20km, which will be varied later under different settings; (b) SU nodes, including STs and SRs, are scattered along the edge of the EZ. We consider 12 STs deploying along the edge of the EZ, where each ST serving 9 SRs in each cell and 108 SRs in total are considered. For each ST, the coverage radius is set to be 150m. For each SR node, noise variance is set to be $\sigma_n^2=-94$ dBm. For the downlink transmission of STs, the 3GPP urban pathloss model with a path exponent of 3.6 is considered. (c) In this work, we consider a PU, which stays within the privacy zone. The large-scale propagation loss model for interference is defined in \cite{sss2013NIIT}. The interference from SU nodes contains many terms including the transmitter/receiver insertion loss (2 dB), and cable loss. More details can be found in \cite{sss2013NIIT}. Unless otherwise specified, we consider 121 cells in the privacy zone candidate set $\mathcal{C}_c$. {There are many ways that the number of cells in the privacy zone will affect the design. For instance, configuring each with the large size can cause PUs to experience different large scale fadings in the same cell. At the same time, challenges such as large-scale mixed integer programming and excluding more SUs cannot be neglected. We leave such problems in our future research.} {Please note that we only conduct numerical simulations in this work and the experimental analysis on the real-life datasets will be presented in the future based on the measurements from commercial 5G wireless networks.}     

Problems involving integer programming are difficulty to converge due to large variation of performance. Before presenting the numerical results of the proposed scheme, we provide simulation results to illustrate the convergence performance and specify the proper iteration numbers. {In Fig. \ref{Fig.sub.243_CR1}, we present the convergence performance for the SCA algorithm-based SU data rate maximization. The x-axis represents the iteration numbers and the y-axis represents the difference between current and last objective values. 
From Fig. \ref{Fig.sub.243_CR1}, under various aggregate interference thresholds, we observe that the SCA algorithm usually converges after 2 or 3 iterations. Besides, when the specified threshold $I_{th}$ is relaxed from -116dBm to -112dBm, the algorithm quickly converges in about 3 steps.} This is due to the fact that a lower interference threshold induces less fluctuation over the convergence threshold. Based on the analysis above, we set the iteration number under different interference thresholds to be 4 in the following simulations. 
\vspace{-.35cm}
\begin{figure}
   \centering
    \includegraphics[width=0.84\linewidth,height=0.64\linewidth]{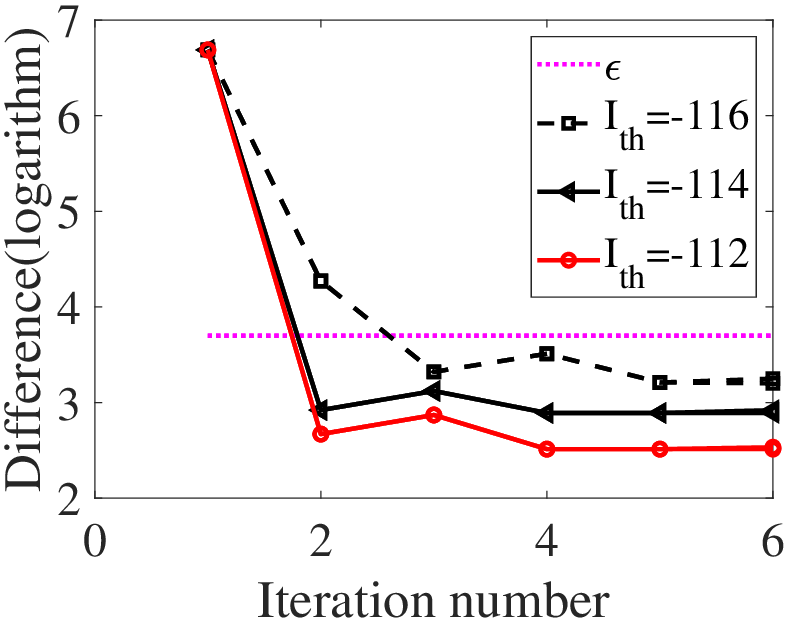}
    \caption{Convergence vs. Iteration Number.}
    \label{Fig.sub.243_CR1}
\end{figure}

\begin{figure}[htbp]
    \centering
    \subfigure[The sum rate under various privacy zone size.]{ 
    \includegraphics[width=0.84\linewidth,height=0.7\linewidth]{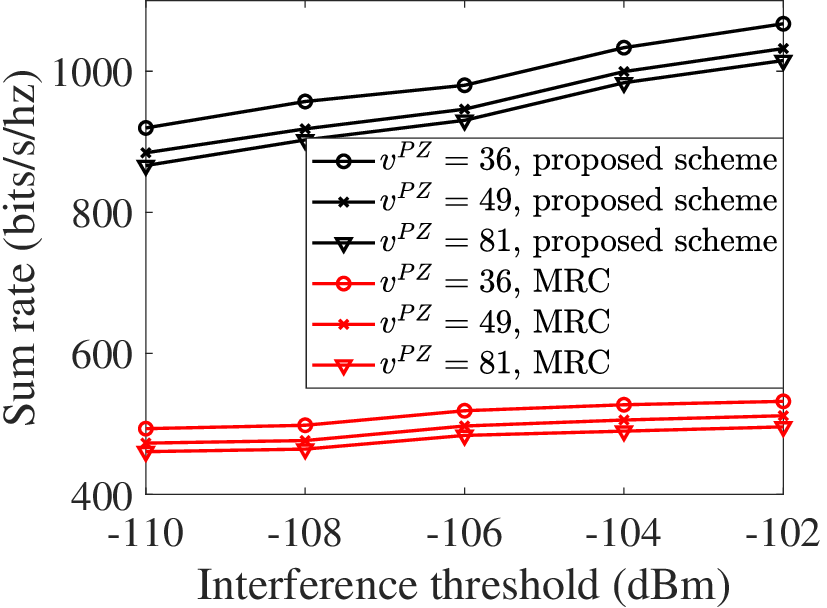}
    \label{fig:CR_pic9}
    }
    \subfigure[SU exclusion, $N_{re}=10$]{ 
    \includegraphics[width=0.8\linewidth,height=0.68\linewidth]{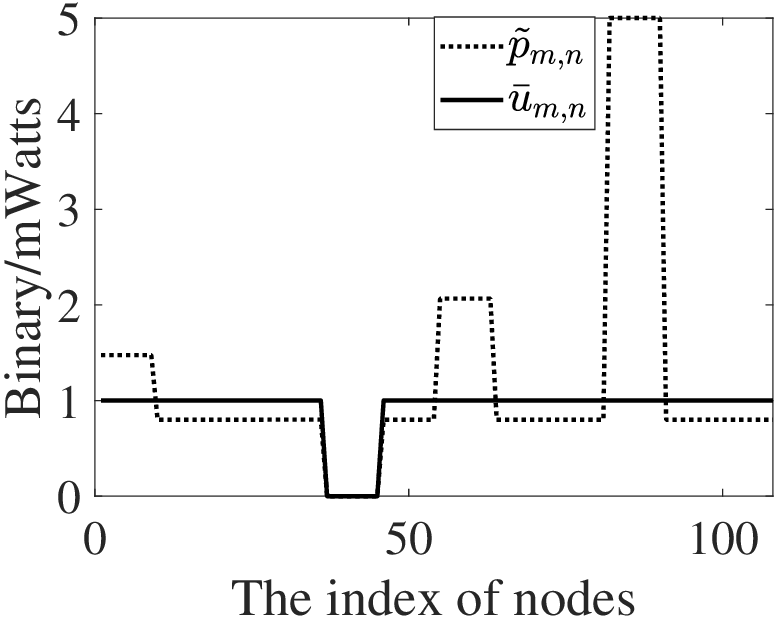}
    \label{Fig.sub.123_CR89}
    }
    \caption{The Sum Rate and SU Node Exclusion}
    \vspace*{-.1cm}
\end{figure}

\vspace{-.2cm}
\subsection{The Performance of SU Data Rate}

We first evaluate the performance of SU data rate maximization design given the PU location privacy constraints, i.e., with the established privacy zone. It is not difficult to find that the SU data rate of existing schemes can be significantly downgraded in an effort to reduce the aggregate interference below the interference threshold of the privacy zone. In Fig. \ref{fig:CR_pic9}, we compare the sum rates obtained by the proposed Algorithm 1 and the MRC-based scheme \cite{li2015performance} under various privacy zone size {$v^{\text{PZ}}$}. When $v^{\text{PZ}}$ becomes sufficiently large after expanding the privacy zone, e.g., $v^{\text{PZ}}$ = 81, the signal travel distance between STs and PUs is greatly reduced and the low transmission power is desired to avoid severe interference to the PU, which can result in the low sum rate. From Fig. \ref{fig:CR_pic9}, we can easily observe this trend, i.e., increasing {$v^{\text{PZ}}$} leads to the downgraded sum rate of the MRC-based scheme. The sum rate of the proposed scheme, however, is less affected by the large {$v^{\text{PZ}}$}, i.e., for a given interference threshold -114dBm, the proposed scheme can achieve the sum rate as high as 854 bits/s/Hz. This is because the proposed scheme is able to efficiently mitigate the high interference caused by the short travel distance via beamforming. For instance, the proposed scheme with beamforming enables each ST to focus its mainlobe/sidelobe over the SRs, while putting the deep null over the angular 
in the direction of the privacy zone on the beampattern.

To prove the effectiveness of $\tilde{p}_{m,n}$ in (P1.1), formulated according to the Big-M formulation, one realization of the optimal SU node exclusion is given in Fig. \ref{Fig.sub.123_CR89}. With the SU exclusion factor $N_{re}=10$, nearly 10 SUs are excluded from the current channel in this realization. In this figure, for the ease of comparison, we normalize the level of allocated power between $[1,5]$, which is represented in the dotted line. The binary indicator $\bar{u}_{m,n}$ is denoted by the line in dark. It is observed that different power levels are assigned to SUs based on the channel conditions. Moreover, to maximize the power utilization efficiency, the power of the excluded nodes, with $\bar{u}_{m,n}=0$, is also forced to be 0. This is attributed to the fact that given the SU exclusion factor $N_{re}$, the proposed scheme will begin by excluding the SU nodes causing the large interference to PUs, which is beneficial to incorporate more SU nodes on the band. Then, for the SU nodes with weak channel conditions, more power will be assigned accordingly, which aims to meet the minimum QoS constraints.

\vspace{-.2cm}
\subsection{The Performance of PU Location Privacy}

\paragraph{PU Location Privacy V.S. Various SU Exclusion Factors}In Fig. \ref{fig:CR_pic1}, we compare the PU location privacy (in bits) achieved by the following schemes: 1) The proposed scheme with SU node exclusion, i.e.,  $N_{re}=30$ or $N_{re}=45$, which is obtained by solving problem (\ref{P253_SDFSG}); 2) The proposed scheme without SU node exclusion, which is obtained by solving problem (\ref{P253_SDFSG}) after setting $N_{re}=0$; and 3) the MRC beamformer-based scheme, where the interference control is not considered during the beamformer design. It is chosen as the baseline. There are 120 SR nodes considered in the simulation. For every single-antenna SR, the range of power allocation is set to be between 26 dBm and 50 dBm. One can observe from Fig. \ref{fig:CR_pic1} that when the interference threshold is below than -104 dBm, the proposed scheme has a higher PU location privacy compared with other schemes. In particular, the proposed scheme with $N_{re}=30$ or $N_{re}=45$, can operate under the low thresholds, i.e., -106dBm and achieve the location privacy 6.91. In contrast, the MRC-based scheme cannot find the feasible privacy zone if the threshold $I_{th}<-104~ \text{dBm}$ because its location privacy quickly drops to zero after $I_{th}<-104 ~\text{dBm}$. This is expected since with the interference mitigation capability enabled by beamforming, for the proposed scheme, the transmitted signals within the angular space, that is adjacent to the privacy zone, are highly attenuated by forcing its antenna gains to approach to zero, which can easily cause a 30dB to 40dB attenuation to the interference from STs. Higher interference attenuation in the direction of privacy zone leads to a larger expansion margin for the privacy zone, which thus can maintain the PU location privacy if $I_{th}<-104~ \text{dBm}$. In addition, when PUs have receivers with the high sensitivity requirements on the unintentional interfering signals, the low threshold, i.e., $I_{th}<-104~\text{dBm}$, is usually demanded. 
The MRC-based scheme cannot find a feasible privacy zone in this case, which forces the SAS to exclude all the SU nodes on the shared band to decrease the interference level at PUs and prioritize its availability. If this occurs, the PU location privacy is assigned to be 0 in the figure for the ease of presentation. The interference threshold triggering the spectrum sharing service outage is called cutoff threshold. The proposed scheme with the SU nodes exclusion has the lower cutoff thresholds. {When $N_{re}=30$ or $N_{re}=45$, the cutoff threshold of the proposed scheme are -108dBm and -110 dBm while that of MRC-based scheme is -104 dBm. The -4 dBm difference in the threshold has a huge impact on the PU location privacy protection.} In Fig. \ref{Fig.main_CR1STS}, given a squared candidate set with $|\mathcal{C}_c|=121$ and $N_{re}=10$, two privacy zones, 
represented by the cells in yellow, are derived. If the lower interference threshold is required, i.e., reducing the threshold from -106 dBm to -108 dBm, it is observed that the privacy zone can shrink significantly. The proposed scheme achieves lower interference thresholds, i.e 2dBm to 4dBm lower than that of MRC-based scheme. For the airborne or shipborne radars that operates in environments 
requiring very low interference, the proposed scheme is capable to satisfy the requirement in this case.
\vspace{-.35cm}
\begin{figure}[htbp]
    \centering
    \subfigure[Location privacy versus interference threshold under various SU exclusion factor.]{ 
    \includegraphics[width=0.84\linewidth,height=0.72\linewidth]{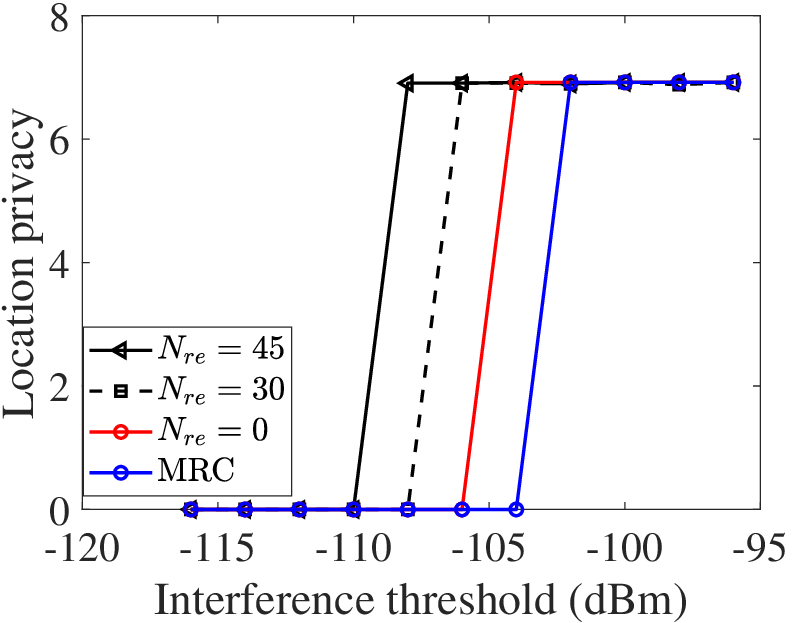}
    \label{fig:CR_pic1}
    }
    \subfigure[Location privacy versus the number of nodes under various SU exclusion factor]{ 
    \includegraphics[width=0.84\linewidth,height=0.72\linewidth]{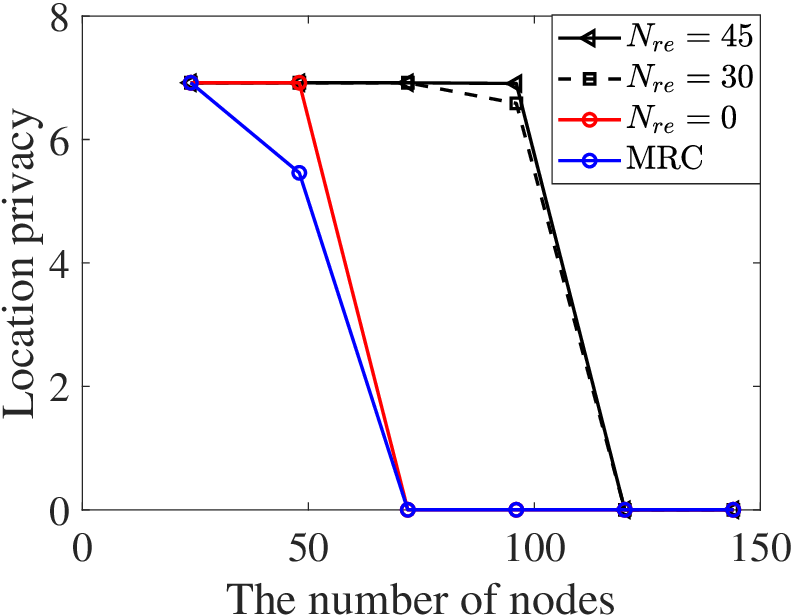}
    \label{fig:CR_pic2}
    }
    \caption{Location Privacy Under Various SU Node Exclusion Factors}
    \vspace*{-0.2cm}
\end{figure}

\paragraph{PU Location Privacy V.S. the Number of SU Nodes}We compare the location privacy (in bits) with various numbers of the SU nodes in Fig. \ref{fig:CR_pic2} with the interference threshold $I_{th}=-108 ~\text{dBm}$ and the EZ radius $EZ=20~\text{km}$. Several important observations can be found from Fig. \ref{fig:CR_pic2}. First, as expected, the PU location privacy deceases as the number of SU nodes increases. This is because serving more SU nodes directly increases the level of aggregate interference in the privacy zone, especially for the cells at the edge of privacy zone, which makes the marginal area in the privacy zone to gradually become infeasible for the interference threshold and finally lead to the shrunk privacy zone. Second, by comparing the proposed scheme and MRC-based scheme, the proposed scheme has obvious advantages while serving a large number of SU nodes. When massive SU nodes, i.e., 100 SU nodes, are considered, the proposed scheme with the SU exclusion factor $N_{re}=45$ or $N_{re}=30$ can maintain a high PU location privacy equal to 6.91, whereas the MRC-based scheme cannot establish a feasible privacy zone with more than 50 SU nodes. This observation indicates that the proposed scheme is capable to manage the increased aggregate interference caused by the massive SU transmissions. By using various SU exclusion factors, the proposed scheme can exclude the SU nodes that are causing severe interference with the antenna mainlobe pointing to the privacy zone, and keep the SU nodes with links inducing less interference. For this reason, the proposed scheme only needs to exclude a small number of SU nodes to maintain the location privacy level. In consequence, the proposed scheme has the resilience to handle a surge in the spectrum sharing requests with the massive SU nodes.

\vspace{-.35cm}
\begin{figure}[]
\centering  
\subfigure[$N_{re}=10$, $|\mathcal{C}_c|=121$, $I_{th}=-108$ dBm]{
\label{Fig.sub.1_CR3}
\includegraphics[width=0.86\linewidth,height=0.6\linewidth]{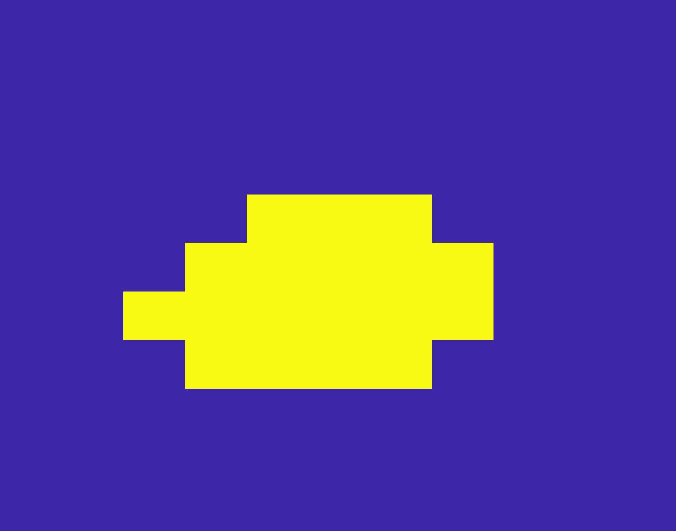}}
\subfigure[$N_{re}=10$, $|\mathcal{C}_c|=121$, $I_{th}=-106$ dBm]{
\label{Fig.sub.2_CR1}
\includegraphics[width=0.86\linewidth,height=0.6\linewidth]{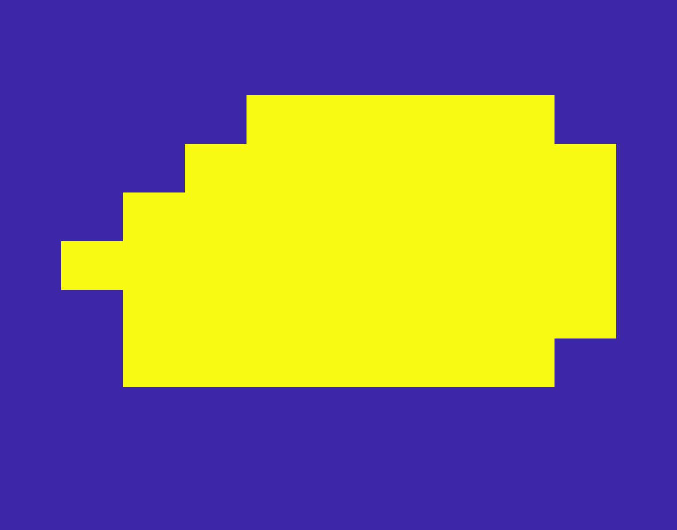}}
\caption{Privacy Zones Design Under Various Interference Threshold}
\label{Fig.main_CR1STS}
\vspace*{-.2cm}
\end{figure}

\vspace{.3cm}
\paragraph{PU Location Privacy V.S. the EZ Radius}In addition to the number of SU nodes, the EZ radius also has a direct impact on the performance of the PU location privacy. An intuitive benefit coming after increasing the EZ radius is that the interference from ST nodes can be reduced clearly, which results in an expanded privacy zone. However, the negative effect of a large EZ radius nor can be ignored. After enlarging the EZ radius, all of SU nodes within the EZ has to be excluded from the current band correspondingly, which leads to the low spectrum/space utilization rate. Therefore, a scheme, providing the high PU location privacy under the small EZ radius, is highly desired. In Fig. \ref{fig:CR_pic4} and Fig. \ref{fig:CR_pic3}, the location privacy under the different EZ radii is given. The location privacy of the proposed scheme with exclusion factor $N_{re}=20$ (denoted as Ex), the proposed scheme with $N_{re}=0$, and the MRC-based scheme are represented by the black lines, red lines and blue lines, respectively. In Fig. \ref{fig:CR_pic4}, it is observed that for the relatively small EZ radius, the location privacy of the proposed scheme with $N_{re}=20$ begins to decrease at $I_{th}=-105$ dBm, whereas the location privacy of MRC-based scheme quickly approaches 0 when $I_{th}=-102$ dBm with $\text{EZ}=18~\text{km}$ or $I_{th}=-101$ dBm with $\text{EZ}=20~\text{km}$. Such a performance gain is mainly attributed to the fact that the MRC-based scheme cannot mitigate the strong interference associated with the short interference travel distance after decreasing the EZ radius and thus demands a high interference threshold to search for a feasible privacy zone. In contrast to the MRC-base scheme, by adjusting the beamformer, the proposed scheme can avoid the strong interference from ST nodes pointing to the marginal area of the privacy zone, which leaves more flexibility in reponse to the increased interference due to  reducing the EZ. The same phenomenon can be observed in Fig. \ref{fig:CR_pic3} while investigating the impact of the small EZ radius over the SU nodes numbers. With the interference threshold $I_{th}=-104~\text{dB}$ and user exclusion factor $N_{re}=20$, for the MRC-based scheme, its location privacy starts to drop quickly while serving more than 50 SU nodes with  $\text{EZ}=18\text{km}$. In contrast to the MRC-based scheme, the location privacy of the proposed scheme $N_{re}=0$ begins to decrease while serving over 70 SU nodes on the current band with $EZ=18$km. The proposed scheme with the SU node exclusion still can achieve high location privacy while serving 100 SU nodes. In summary, the above observations demonstrate the proposed scheme is capable to maintain the high PU location privacy and support more SU nodes under the small EZ radius case, which thus is desired in the dense networks. 

\begin{figure}[htbp]
    \centering
    \subfigure[Location privacy versus interference threshold under various radius of EZ.]{ 
    \includegraphics[width=0.84\linewidth,height=0.7\linewidth]{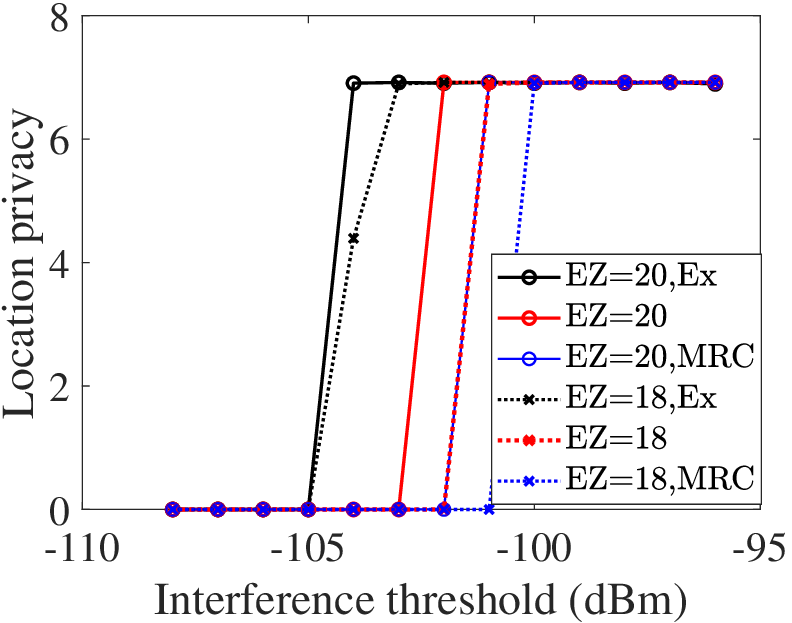}
    \label{fig:CR_pic4}
    }
    \subfigure[Location privacy versus the number of nodes under various radius of EZ.]{ 
    \includegraphics[width=0.84\linewidth,height=0.7\linewidth]{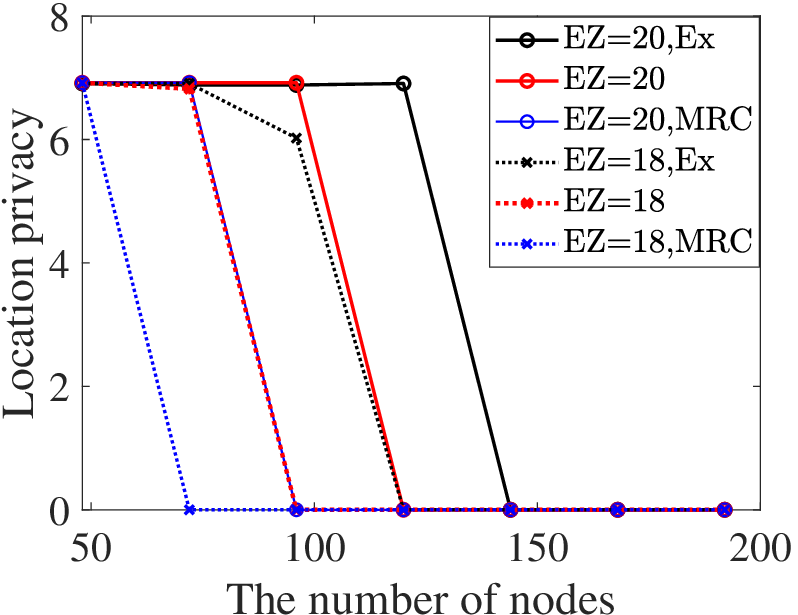}
    \label{fig:CR_pic3}
    }
    \caption{Location Privacy Under Various EZ Radius}
    \vspace*{-.2cm}
\end{figure}

\vspace{-.3cm}
\begin{figure}[htbp]
    \centering
    \subfigure[Location privacy versus interference threshold under various antenna gain ratio.]{ 
    \includegraphics[width=0.84\linewidth,height=0.7\linewidth]{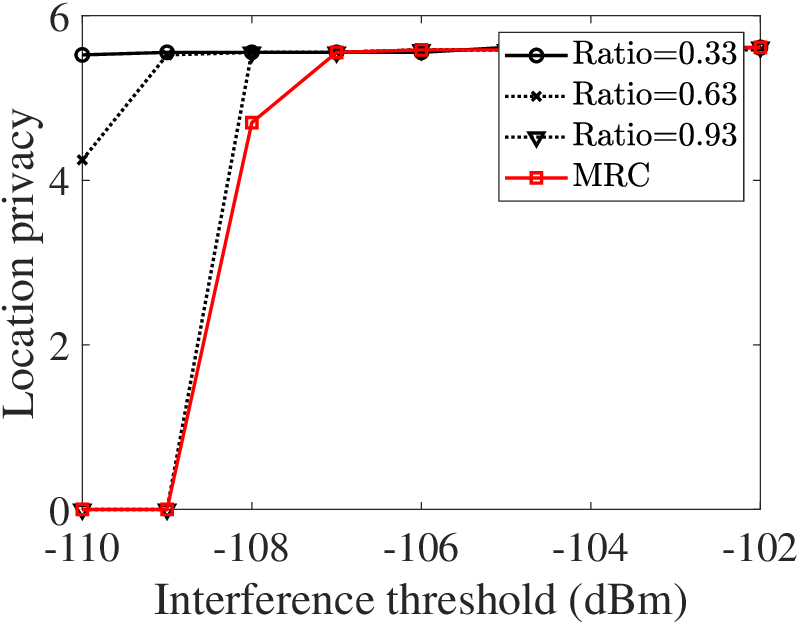}
    \label{fig:CR_pic5}
    }
    \subfigure[The number of nodes versus interference threshold under various antenna gain ratio.]{ 
    \includegraphics[width=0.84\linewidth,height=0.7\linewidth]{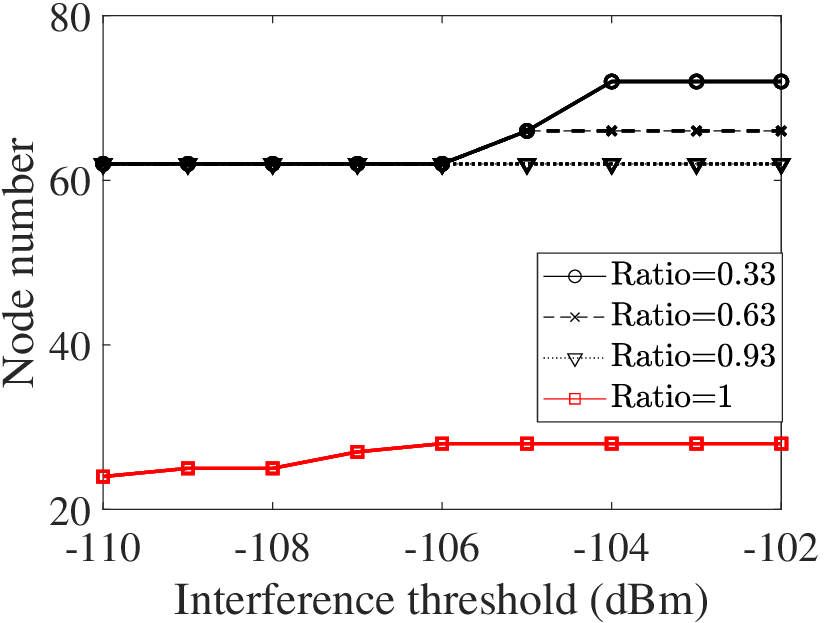}
    \label{fig:CR_pic6}
    }
    \caption{Location Privacy Under Antenna Gain Ratios}
\vspace{-.1cm}
\end{figure}

\vspace{.2cm}
\paragraph{PU Location Privacy V.S. the Antenna Gain Ratios}With beamforming ability, the proposed scheme can restrict its antenna gain and operate with the low antenna gain specified by the antenna gain ratio. The reduced antenna gain leads to a decreased upper bound for the interference. The antenna gain ratio thus can directly affect the PU location privacy. In Fig. \ref{fig:CR_pic5} and Fig. \ref{fig:CR_pic6}, we investigate the impact of the various antenna gain ratio on the PU location privacy and SU node numbers operating on current networks. We consider the PU location privacy achieved by three designed schemes, i.e., 1) the proposed scheme with SU node exclusion $N_{re}=10$; 2) the proposed scheme without SU node exclusion; 3) the MRC-based scheme. The privacy zone candidate set contains 49 candidate cells and the spectrum sharing system aims to serve 78 SU nodes. From Fig. \ref{fig:CR_pic5}, it is observed that the proposed scheme under the low antenna gain ratios such as 0.33 or 0.63 achieves higher PU location privacy than that of the MRC-based scheme at various thresholds. Owing to the beamforming, the proposed scheme can achieve a low antenna gain after specifying a low antenna gain ratio. As a result, the upper bound on the interference constraint is further decreased. The reduced interference upper bound will increase the margin between the interference threshold $I_{th}$ and the aggregate interference. In this case, the proposed scheme can achieve more flexibility for the interfere control while expanding the privacy zone and allocating the transmission power. The higher transmission power and more SU nodes will be tolerated. For instance, under the low antenna gain ratios, the proposed scheme can maintain the high PU location privacy with $I_{th}=-109$ dBm, whereas the MRC-based scheme requires a minimum interference threshold $I_{th}=-107$ dBm to keep the same location privacy level. This phenomenon also can be reflected in the number of SU nodes served on the current band. In Fig. \ref{fig:CR_pic6}, it is obvious that the propose scheme can sustain more SU nodes on the current band via using the low antenna gain ratios. This is because the reduced aggregate interference due to the low antenna gain ratio allows more SU nodes to access the current band with the constrained transmission power level. In contrast, the MRC-based scheme has no control over the antenna gain according to its design principle and thus cannot restrict its antenna gain. Consequently, the MRC-based scheme cannot be reactive to a serge in the SU nodes and cannot serve the massive SU nodes.

\begin{figure}[htbp]
    \centering
    \subfigure[The summation rate under various EZ radius.]{ 
    \includegraphics[width=0.84\linewidth,height=0.7\linewidth]{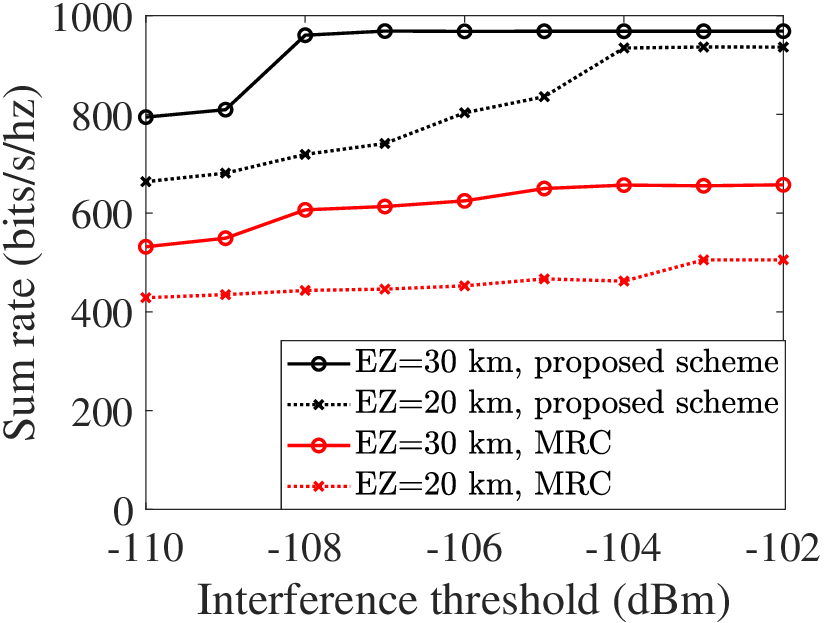}
    \label{fig:CR_pic7}
    }
    \subfigure[The summation rate under various antenna gain ratio.]{ 
    \includegraphics[width=0.84\linewidth,height=0.7\linewidth]{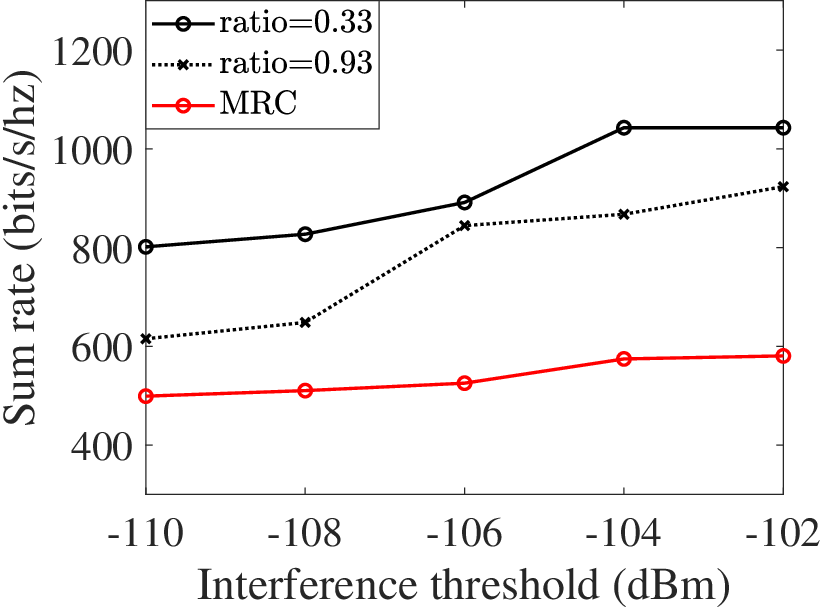}
    \label{fig:CR_pic8}
    }
    \caption{Sum Data Rate Under Various EZ Radius and Antenna Gain Ratios}
    \label{fig:8_2}
    \vspace*{-.1cm}
\end{figure}

\vspace{-.2cm}
\subsection{The Performance of SU Data Rate for Problem (\ref{P2_step2_final})}

In Fig. \ref{fig:8_2}, we compare the summation rate obtained by (\ref{P2_step2_final}) under various EZ radii and antenna gain ratios. Note the same number of SU nodes is considered throughout this analysis. It is observed from Fig. \ref{fig:CR_pic7} that for the proposed scheme and MRC-based scheme, the summation rate tends to become saturated when $I_{th}$ is sufficiently large. In addition, the summation rate of the proposed scheme converges more rapidly after the EZ radius increases from 20km to 30km. In contrast, the convergence pace of the MRC-based scheme is still very slow even the EZ radius increase. Enlarging the EZ requires the transmitted signals to travel a longer distance, which forces the interference to have a larger attenuation and thus helps improve the summation rate. The proposed scheme takes advantage of this benefit to achieve a better trade-off between increasing transmission power and reducing the aggregate interference by using beamforming. The MRC-based scheme cannot be adaptive to the enlarged EZ radius in time, which results in a low convergence pace. In Fig. \ref{fig:CR_pic8}, we also present the summation rate under various antenna gain ratios. As the decreasing of antenna gain ratios, the proposed scheme can restrict the antenna gain that corresponds to the privacy zone by performing beamforming in the angular space, which helps decrease the upper bound for the aggregate interference and the transmitted signal can be assigned with the higher transmission power without inducing large interference to the privacy zone. As a result, the high summation rates and desirable PU location privacy can be achieved at the same time.

\vspace{-.2cm}
\section{Conclusion}
\label{Con_cl}

In this paper, we investigate the benefits of beamforming techniques on the PU location privacy enhancement. Two scenarios are considered. In the first problem, we aim to improve the SU network communication performance with the specified PU location privacy requirements. In the second problem, we intend to improve PU location privacy given constraints on the QoS requirement of SUs. Extensive numerical evaluations are conducted. The numerical results show that in the first scenario, the proposed scheme can serve more SU nodes with higher communication throughput and also satisfy the specified PU location privacy requirements. In the second scenario, the proposed scheme enables to configure a much larger privacy zone while satisfying the SU network throughput requirement.

\vspace{-.2cm}
{\centering{\section*{Appendix I}}}

In (\ref{f1_interference}), $G(\theta_{m,n}^{i,j})$ is the antenna gain for $u_{m,n}$ and it is given as:
\vspace{-.25cm}
\begin{eqnarray}
G(\theta_{m,n}^{i,j})=\frac{\epsilon _{a}~{{\max_k}}\big(U(\theta_k)\big)}{\frac{1}{4\pi}\int_{0}^{2\pi}\int_{0}^{\pi}U(\theta)\sin{\theta}\text{d}\theta\text{d}\phi}.
\label{IOPUSY}
\end{eqnarray}
The antenna gain is expressed in terms of the radiation intensity $U(\theta_k)$ towards the angle of arrival $\theta_k$. $\epsilon _{a}$ is the antenna efficiency and it is a constant. The definition of the radiation intensity is given as $U(\theta_k)={\bf{w}}_{m,n}^H{\bf{v}}(\theta_{k}){\bf{v}}(\theta_{k})^H{\bf{w}}_{m,n}.$ In the proposed scheme, antenna gain and directivity are in comparison to an isotropic antenna. In general, for the isotropic antenna element, the radiation intensity involving the array factor is $U(\theta_{k})
={{\bf{w}}_{m,n}^H{\bf{v}}(\theta_{k}){\bf{v}}(\theta_{k})^H{\bf{w}}_{m,n}}$. Here ${\bf{w}}_i$ is the beamforming vector and ${\bf{v}}(\theta_k)$ is the antenna steering vector on the quantized angle of arrival $\theta_k=x_{\Delta}*k,~k\in\{1,...,K\}$. ${x_{\Delta}}$ is the quantization level. For every single beamformer, the average radiation intensity is given as $\overline{U}_o^{m,n}=\frac{1}{4\pi}\int_{0}^{2\pi}\int_{0}^{\pi}U(\theta)\sin{\theta}\text{d}\theta\text{d}\phi
$. Based on the Trapezoid rule for the integral, the average antenna gain $\overline{U}_o^{m,n}$ is approximated as:
\vspace{-0.15cm}
\begin{eqnarray}
\overline{U}_o^{m,n}&\approx &
{x_{\Delta}}\Big(\sum_{n\in I\setminus I_1}\frac{U(\theta_n)\sin(\theta_n)}{2}+
\sum_{n\in I_1}{U(\theta_n)\sin(\theta_n})\Big)\nonumber
\label{RTSY_89SUS}
\end{eqnarray}
where the index set is defined as $I=\{I_1,1,K\}$. In (\ref{RTSY_89SUS}), the approximation for $\overline{U}_o^{m,n}$ can be represented into $\overline{U}_o^{m,n}\approx{\bf{w}}_{m,n}^H{\bf{A}}{\bf{w}}_{m,n}$, where the matrix ${\bf{A}}$ is defined as below:
\vspace{-.25cm}
\begin{eqnarray}
{\bf{A}}=\sum_{n\in  I_1}{x_{\Delta}\sin(\theta_n)}{\bf{v}}(\theta_n){\bf{v}}(\theta_n)^H \nonumber\\
+\sum_{n\in I\setminus I_1}\frac{x_{\Delta}\sin(\theta_n)}{2}{\bf{v}}(\theta_n){\bf{v}}(\theta_n)^H
\label{SVCTR_34}
\end{eqnarray}

For a given AoA $\theta_k$, the ratio of its maximum radiation intensity $U(\theta_k)$ to its mean radiation intensity is the antenna directivity $D(\theta_k)$. The power gain or simply gain $G(\theta_n)$ (in watts) of an antenna in a given direction takes efficiency into account and is defined as the product of the efficiency factor $\epsilon_{a}$ and the antenna directivity. It is given as $G(\theta_{m,n}^{i,j})=\epsilon _{a}\cdot D(\theta_{m,n}^{i,j})=\epsilon _{a}\cdot\frac{\max\big(U(\theta_k)\big)}{\overline{U}_o^{m,n}}$.

\vspace{-.2cm}
{\centering{\section*{Appendix II}}}

To find the upper bound for the antenna gain $G(\theta_{m,n}^{i,j})$, we need to approximate its numerator and denominator, respectively. For antenna gain defined in Appendix I, please note we can easily prove that ${\bf{A}}$ is a positive semi-definite (PSD) matrix. For a PSD matrix, matrix ${\bf{A}}$ has the eigendecomposition as ${\bf{A}}={\bf{Q}}{\bf{\Gamma}}{\bf{Q}}^H$. We can find the following inequality for the average radiation intensity $\overline{U}_o^{m,n}\approx {\bf{w}}_{m,n}^H{\bf{A}}{\bf{w}}_{m,n}$ as:
\vspace{-.25cm}
\begin{eqnarray}
{\bf{w}}_{m,n}^H{\bf{A}}{\bf{w}}_{m,n}=\sum_{i=1}^{N}\lambda_i ||{\bf{w}}_{m,n}^H{\bf{q}}_i||^2_2\geq(\lambda_{min})_{A}||{\bf{w}}_{m,n}||^2_2.
\label{WERT_78ER}
\end{eqnarray}

Similarly, we can find the inequality for the radiation intensity in the numerator as:
\vspace{-.25cm}
\begin{eqnarray}
\max\big(U(\theta_{k})\big)=\sum_{i=1}^{N}\tilde{\lambda}_i ||{\bf{w}}_{m,n}^H{\bf{p}}_i||^2_2 \leq \tilde{\lambda}^1_{m,n}||{\bf{w}}_{m,n}||^2_2.
\label{QWERT_rt5}
\end{eqnarray}

This is due to the fact for any direction $k$ with maximum antenna factor ${\bf{w}}_{m,n}^H{\bf{v}}(\theta_k){\bf{v}}(\theta_k)^H{\bf{w}}_{m,n}$, the eigen-value $\tilde{\lambda}^1_{m,n}$ of matrix ${\bf{B}}={\bf{v}}(\theta_k){\bf{v}}(\theta_k)^H$ is equal to a constant. In this case, the matrix ${\bf{B}}$, containing the replica vector, can be viewed as the base of DFT matrix. By combining the result of (\ref{WERT_78ER}) and (\ref{QWERT_rt5}), we can get the upper bound of antenna gain for instant beamformers.

\vspace{-.4cm}
\small
\bibliographystyle{IEEEtran}
\vspace{4mm}


\end{document}